\def\cC{{\mathcal C}}
\def\cD{{\mathcal D}}
\def\cF{{\mathcal F}}
\def\cM{{\mathcal M}}
\def\cS{{\mathcal S}}
\def\cU{{\mathcal U}}
\def\cV{{\mathcal V}}
\def\cX{{\mathcal X}}
\def\cZ{{\mathcal Z}}
\mathchardef\alphag="7C0B \mathchardef\betag="7C0C
\mathchardef\gammag="7C0D \mathchardef\deltag="7C0E
\mathchardef\varepsilong="7C22 \mathchardef\varphig="7C27
\mathchardef\psig="7C20 \mathchardef\zetag="7C10
\mathchardef\epsilong="7C0F \mathchardef\rhog="7C1A
\mathchardef\taug="7C1C \mathchardef\upsilong="7C1D
\mathchardef\iotag="7C13 \mathchardef\thetag="7C12
\mathchardef\pig="7C19 \mathchardef\sigmag="7C1B
\mathchardef\etag="7C11 \mathchardef\omegag="7C21
\mathchardef\kappag="7C14 \mathchardef\lambdag="7C15
\mathchardef\mug="7C16 \mathchardef\xig="7C18
\mathchardef\chig="7C1F \mathchardef\nug="7C17
\mathchardef\varthetag="7C23 \mathchardef\varpig="7C24
\mathchardef\varrhog="7C25 \mathchardef\varsigmag="7C26
\mathchardef\Omegag="7C0A \mathchardef\Thetag="7C02
\mathchardef\Sigmag="7C06 \mathchardef\Deltag="7C01
\mathchardef\Phig="7C08 \mathchardef\Gammag="7C00
\mathchardef\Psig="7C09 \mathchardef\Lambdag="7C03
\mathchardef\Xig="7C04 \mathchardef\Pig="7C05
\mathchardef\Upsilong="7C07
\newtheorem{theorem}[subsection]{Theorem}
\newtheorem{lem}[subsection]{Lemma}
\newtheorem{prop}[subsection]{Proposition}
\newtheorem{ass}[subsection]{Assertion}
\theoremstyle{definition}
\newtheorem{def-prop}[subsubsection]{Definition-Proposition}
\theoremstyle{remark}
\theoremstyle{plain}
\numberwithin{equation}{subsection}
\def\boxit#1#2{\setbox1=\hbox{\kern#1{#2}\kern#1}%
	\dimen1=\ht1 \advance\dimen1 by #1 \dimen2=\dp1
	\advance\dimen2 by #1
	\setbox1=\hbox{\vrule height\dimen1 depth\dimen2\box1\vrule}%
	\setbox1=\vbox{\hrule\box1\hrule}%
	\advance\dimen1 by .4pt \ht1=\dimen1 \advance\dimen2 by
	..4pt \dp1=\dimen2 \box1\relax}
\newcommand{\abs}[1]{\lvert#1\rvert}
\let\runauthor\@author
\let\runtitle\@title
\begin{document}
	
	\title{A new proof of Geroch's theorem on temporal splitting of globally hyperbolic spacetimes}

	\author{Ali Bleybel}

	\address{Faculty of Sciences (I), Lebanese University, Beirut, Lebanon}

	\keywords{Spacetime, Causality, Global hyperbolicity}
	
	%\date{}
	
	%\dedicatory{}
	
	%\commby{}
	\maketitle
	
	%%% ----------------------------------------------------------------------
	%
	\begin{abstract}
		In this paper, we use our results concerning temporal foliations of causal sets in order to provide
		a new proof of Geroch's Theorem on temporal foliations in a globally hyperbolic spacetime.  
	\end{abstract}
	
	\section{Introduction}
	\label{intro}
	
	 In this paper, we describe how to recover Geroch's Theorem \cite{G} on the existence of a splitting of a globally hyperbolic spacetime $M$ into Cauchy slices from our results in \cite{B-Z}.  \\  
	 The interest of this proof method is twofold. Besides providing a new perspective on the notion of Cauchy hypersurface and its relation with the corresponding notion in the context of causal set theory (keeping in mind that the resemblance is far from perfect), it provides a way to transfer results from the discrete into the continuous settings. \\
	 Note that in ~\cite{BM}, a similar technique was used in order to show that a (increasing) sequence of nested causets (indexed by $\mathbb{N}$) allow to recover the manifold topology. These causets are assumed to be uniformly embedded in the manifold and are generated by a Poisson process, so that, at each stage the density of points of each causet remains constant.    	
	 
	 Note that other results in the same direction (i.e. recovering continuum topology from causal sets) were obtained later (see ~\cite{MRS}, ~\cite{MP}). 
	 
	 After an introductory section, we provide additional results on temporal splitting of causal sets, then we proceed to the proof of the main result (Geroch's Theorem). In  section 5 we apply the results already obtained in order to show the existence of a time function on a globally hyperbolic spacetime. In the final section it is shown that the slices have spacelike tangent hyperplanes defined almost everywhere. 
	 
	% \textit{Acknowledgement}: I wish to thank the reviewers for carefully reading the manuscript, and for providing many helpful comments and corrections.
	 
	 \section{Notation and preliminaries}
	 In this section I review some background results and notation on ordered sets that will be used throughout the text. 
	 \subsection{Causal spaces}
	   A \textit{Causal space} is a set $\cX$ endowed with a partial order relation $\prec$. % Take care of the order relation symnbol!
	    The point of introducing this terminology is just to emphasize the link with the more restrictive notion of causal sets. \\
	   Recall that a partial order $\prec$ is a binary relation that is reflexive $x \prec x$, transitive $x \prec y \, \& \, y \prec z  \rightarrow x \prec z$ and antisymmetric $(x \prec y \, \& \, y \prec x) \rightarrow x=y$ for all $x, y , z \in \cX$.  Sometimes we write $y \succ x$ to mean $x \prec y$. By $x \precneqq y$ we denote $x \prec y \,\&  \, x \neq y$.    \\
      A causal space is dense when $x \prec y, x \neq y$ implies the existence of some $z, z \neq x, z \neq y$ such that $x \prec z \prec y$. \\
      Two elements $x$ and $y$ of a causal space are {\it incomparable} if neither $x \prec y$ nor $y \prec x$; this is abbreviated using the notation $x \, \| \, y$.    \\
      Let $x \in \cX$. The set of elements lying above $x$ will be denoted by $x_{\uparrow}$, while those lying below $x$ form the set $x_{\downarrow}$.  \\ 
      Given two elements $x, y \in \cX$, we denote by $[x,y]$ (the interval having endpoints $x$ and $y$ respectively) the set
      $$ [x , y] := \{ z \in \cX |\, x \prec z \prec y \} = x_{\uparrow} \cap y_{\downarrow}. $$
      If $X$ is a subset of $\cX$, then: 
      $$  X_\downarrow  := \{ y \in \cX| \, (\exists x \in X) \, y \prec x\}, \quad    X_\uparrow  := \{ y \in \cX| \, (\exists x \in X) \, x \prec y\}. $$
      A \textit{causal set} $\cX$ is a \textit{locally finite} causal space, i.e. the interval $[x,y]$ is finite for all $x, y \in \cX$.  \\
      An \textit{antichain} in a causal space $\cX$ is a subset of $\cX$ whose elements are mutually incomparable. A \textit{chain} is a totally ordered subset of $\cX$.     \\
      Given two antichains $\Sigma_1, \Sigma_2$ in a causal space, we say that $\Sigma_1 \ll \Sigma_2$ if any element of $\Sigma_1$ strictly precedes some element of $\Sigma_2$, or else is incomparable to all elements of $\Sigma_2$, and furthermore, $\Sigma_1 \cup \Sigma_2$ is not an antichain.     \\ 
      Observe that $\ll$ is necessarily anti-reflexive: if $\Sigma \ll \Sigma$ then $\Sigma = \Sigma \cup \Sigma$ is not an antichain (by the definition of $\ll$), which is a contradiction.  
      
      An immediate predecessor of some element $x \in \cC$ is an element $y$ such that $[y, x] = \{x,y\}$. Similarly, an immediate successor is an element $z$ such that $[x,z] = \{x,z\}$.   \\
      A causal set $\cX$ is said to be \textit{connected} if it is connected when viewed as an undirected graph. More precisely, given two arbitrary elements $a,b \in \cX$, there exists a sequence $x_0=a, x_1, \cdots, x_{n-1}, x_n=b$ such that $x_i \prec x_{i+1}$ or $x_{i+1} \prec x_i$ for $i=0, \cdots, n-1$.

       A \textit{foliation} of a causal space $\cX$ is a partition $\cF$ of $\cX$ into antichains $\Sigma_i, i \in I$ (for $I$ some index set (not necessarily countable!)), such that the relation $\ll$ is a strict total order relation on $\cF$ ($\ll$ is transitive, anti-reflexive ($\forall \Sigma \in \cF$,  $\Sigma \ll \Sigma$ does not hold), and any two distinct antichains $\Sigma_{i_1}, \Sigma_{i_2} \in \cF, \Sigma_{i_1} \neq \Sigma_{i_2}$ are comparable ($\Sigma_{i_1} \ll \Sigma_{i_2}$ or $\Sigma_{i_2} \ll \Sigma_{i_1}$)). \\ 
       Since the relation $\ll$ is anti-reflexive, to see that $\cF$ is a foliation it is sufficient to check that $\ll$ is transitive and such that $\ll|_{\cF \setminus \Delta}$ is a total relation.

       A foliation on $\cX$ is a \textit{temporal foliation} if every antichain $\Sigma$ in the foliation satisfies the following condition:   \\
        $(\ddagger)$  For any inextendible (or maximal) chain $C \subset \cX$, there exist $x, y \in C$ and $z, t \in \Sigma$ such that $ x \prec z$ and $t \prec y$.
      
      If $\cF$ is a foliation of $\cX$, a slice in $\cF$ is more succinctly termed $\cF$-slice. \\
       In case $\Sigma$ is a top slice ($\Sigma _{\uparrow}= \Sigma$) then $\Sigma$ satisfies $(\ddagger)$ if and only if $\Sigma \cap C \neq \emptyset$; 
       the same applies in case $\Sigma$ is a bottom slice ($\Sigma_{\downarrow} = \Sigma$).    
      \subsection{Causality in Lorentzian manifolds}
       Let $(\cM$, g) be a spacetime: $\cM$ is $n$-differentiable manifold ($n \geq 2$), equipped with a metric $g$ of signature $(-, + \dots, +)$. \\
       The {\it chronological past} of an event $p$ in $\cM$, denoted by $I^{-}(p)$, is the set of all events $q \in \cM$ such that $p$ can be reached from $q$ using a timelike curve. Similarly, the chronological future of $p$, $I^+(p)$ is the set of events $q$ which can be reached from $p$ using a timelike curve. The causal past/future of $p$ $J^{\pm}(p)$ is defined analogously by replacing, in the above definition, 'timelike' by 'causal' (where a causal curve is a curve whose tangent vector is nowhere spacelike).   \\
       The {\it Alexandrov interval} is the open set defined as $I(p,q) \equiv I^+(p) \cap I^-(q)$. \\  
       For a {\it globally hyperbolic} spacetime $(\cM, g)$ we have that $\overline{I(p,q)} = J^+(p) \cap J^-(q)$ is compact for all $p, q \in \cM$. \\
       We denote the relation $q \in I^-(p)$ by $q < p$. Similarly, $q \prec p$ is a shorthand for $ q \in J^-(p)$.	  
       
       Let $\cX$ be a subset of $\cM$. Then $\cX$ is naturally equipped with a partial order $\prec_X = \prec|_{\cX}$. Henceforth, the order $\prec_\cX$ will be denoted $\prec$ whenever there is no risk of confusion.  
	\section{An auxiliary result on temporal foliations of causal sets}  \label{causet}
	   
	   In this section we show the following:
	   \begin{prop} \label{prelim}
	   	  Let $\cC$ be a non-empty causal set.
	   	  We assume furthermore that $\cC$ is countable. \\
	   	  Let $\Sigma_0$ be a finite antichain in $\cC$, $\Sigma_0 \neq \emptyset$. Then there exists an antichain $\Sigma$ containing $\Sigma_0$, which satisfies  \\
	   	  $(\ddagger)$  For any inextendible (or maximal) chain $C \subset \cC$, there exist $x, y \in C$ and $z, t \in \Sigma$ such that $ x \prec z$ and $t \prec y$. 
	   \end{prop}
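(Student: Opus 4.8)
\emph{The plan} is to build $\Sigma$ as an increasing union of finite antichains, processing the elements of $\cC$ one at a time. Two reductions organise the argument. First, it is enough to produce a \emph{maximal} antichain $\Sigma\supseteq\Sigma_{0}$ satisfying $(\ddagger)$: once $\Sigma$ is maximal, any $\cC$-minimal element $m$ either lies in $\Sigma$ or satisfies $m\precneqq\sigma$ for some $\sigma\in\Sigma$ (nothing being strictly below $m$), and dually for $\cC$-maximal elements; since in a locally finite causal set a maximal chain is order-isomorphic to an interval of $\mathbb{Z}$, this already secures $(\ddagger)$ for the finite maximal chains and, in the relevant one of the two directions, for those possessing a least or a greatest element, so that only the bi-infinite maximal chains genuinely need attention. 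Second, for a maximal antichain $\Sigma_{\downarrow}\cap\Sigma_{\uparrow}=\Sigma$, so $\cC=\Sigma^{-}\sqcup\Sigma\sqcup\Sigma^{+}$ with $\Sigma^{-}:=\Sigma_{\downarrow}\setminus\Sigma$ and $\Sigma^{+}:=\Sigma_{\uparrow}\setminus\Sigma$, and $(\ddagger)$ holds for every maximal chain if and only if no maximal chain of $\cC$ lies inside $\Sigma^{-}$ and none lies inside $\Sigma^{+}$. I also record the finiteness fact that drives the construction: if $T$ is a finite antichain then $T_{\downarrow}\setminus T$ contains no maximal chain of $\cC$ — such a chain would be topless (a greatest element would be $\cC$-maximal, hence could not be $\precneqq$ an element of $T$), and, being a chain lying below the finite set $T$, it would lie entirely below a single element of $T$, contradicting its maximality; dually for $T_{\uparrow}\setminus T$.

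\emph{The construction} I would use: enumerate the underlying set $\cC=\{c_{1},c_{2},\dots\}$ and define finite antichains $\Sigma_{0}\subseteq\Sigma_{1}\subseteq\Sigma_{2}\subseteq\cdots$ recursively. At stage $n$, if $c_{n}$ is comparable with, or equal to, an element of $\Sigma_{n-1}$, set $\Sigma_{n}=\Sigma_{n-1}$. Otherwise let $N$ be the set of elements incomparable with every element of $\Sigma_{n-1}$ (so $c_{n}\in N$), and adjoin to $\Sigma_{n-1}$ one element $w\in N$ chosen by the rule: if $\{x\in N:x\prec c_{n}\}$ has a minimal element, take $w$ to be one such; otherwise, if $\{x\in N:c_{n}\prec x\}$ has a maximal element, take $w$ to be one such; otherwise take $w=c_{n}$. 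In every case $w$ is incomparable with every element of $\Sigma_{n-1}$, so $\Sigma_{n}:=\Sigma_{n-1}\cup\{w\}$ is a finite antichain. Put $\Sigma=\bigcup_{n}\Sigma_{n}$; being a nested union of antichains it is an antichain containing $\Sigma_{0}$, and the usual bookkeeping — each $c_{n}$ being settled at stage $n$ — makes $\Sigma$ a maximal antichain.

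\emph{Verification, and the main obstacle.} By the second reduction it remains to rule out maximal chains inside $\Sigma^{-}$ and inside $\Sigma^{+}$. The first clause of the rule never enlarges the ``strictly below'' region: a strict predecessor of a \emph{minimal} element $w$ of $\{x\in N:x\prec c_{n}\}$ cannot itself lie in $N$ (it would be an element of $N$ below $w$) and, lying below $w\in N$, must therefore already be $\precneqq$ some element of $\Sigma_{n-1}$; hence stages governed by the first clause keep the ``strictly below'' region equal to $(\Sigma_{0})_{\downarrow}\setminus\Sigma_{0}$, which contains no maximal chain by the finiteness fact. Moreover a maximal chain that is bottomless makes the first clause inapplicable at its (still neutral) elements — the set $\{x\in N:x\prec c_{n}\}$ being then a non-well-founded chain without minimal element — so the second or third clause fires with $w$ picked on that very chain, which is thereby ``cut''; and when $w=c_n$ or $w\succ c_n$, the successors of $w$ on that chain become comparable with $\Sigma$ and so are settled. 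The heart of the proof is then the global bookkeeping: since every element is eventually processed, a maximal chain trapped inside $\Sigma^{-}$ (resp.\ $\Sigma^{+}$) would consist of elements staying neutral arbitrarily late, and the rule is arranged so that each such element, when processed, either enters $\Sigma$ or is covered from the side on which it sits, which together with the finiteness fact yields a contradiction; the treatment of $\Sigma^{+}$ is symmetric, exchanging the first and second clauses. This verification — that the locally chosen $w$'s cumulatively force $\Sigma$ to interleave with every bi-infinite maximal chain — is the only delicate point, and it is exactly where countability (so that every element gets processed) and local finiteness (which fixes the order type of maximal chains and powers the finiteness fact) are used.
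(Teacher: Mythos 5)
Your two reductions and the finiteness fact are sound --- the finiteness fact is exactly the paper's claim $(\dagger)$, and your proof of it (a maximal chain lying strictly below a finite antichain $T$ lies strictly below a single $t\in T$, because the sets $C\cap t_{\downarrow}$ are downward closed in the chain $C$ and hence totally ordered by inclusion) is if anything cleaner than the paper's local-finiteness argument. The gap is in the step you yourself defer as ``the only delicate point'', and it is not a missing routine verification: the selection rule genuinely fails. Here is a countable, locally finite counterexample on which it fails for every enumeration. Let $\cC=\{s\}\cup\{a_j\}_{j\ge 1}\cup\{x_i\}_{i\ge 1}\cup\{y_{i,k}\}_{i,k\ge 1}$, with order generated by $x_i\precneqq x_{i+1}$, $y_{i,k+1}\precneqq y_{i,k}$, $y_{i,1}\precneqq x_i$ and $x_j\precneqq a_j$, and with $s$ isolated; all intervals are finite even though each $x_i$ has an infinite descending tail $y_{i,\cdot}$ below it (local finiteness forbids infinite intervals, not infinite descents). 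The $a_j$ are pairwise incomparable, have nothing above them, and $a_j\,\|\,x_{j+1}$, so $C:=\{y_{1,k}\}_k\cup\{x_i\}_i$ is a maximal chain. Take $\Sigma_0=\{s\}$. By induction every $\Sigma_n$ has the form $\{s\}\cup\{a_j: j\in J\}$ with $J$ finite, the neutral elements being the $a_j$ with $j\notin J$ together with the $x_i,y_{i,k}$ with $i>\max J$. Indeed, when any neutral $x_i$, $y_{i,k}$, or $a_j$ with $j>\max J$ is processed, the set $\{u\in N: u\prec c_n\}$ has no minimal element (every neutral $x_{i'}$ sits atop the neutral infinite descent $y_{i',1}\succneqq y_{i',2}\succneqq\cdots$), so your first clause never fires; the second clause does fire, and every maximal element of $\{u\in N: c_n\prec u\}$ is some $a_j$, which is what gets adjoined (a neutral $a_j$ with $j\le\max J$ is adjoined directly by the first clause, being the unique neutral element below itself). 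The result is $\Sigma=\{s\}\cup\{a_j\}_j$: a maximal antichain with $C\subseteq\Sigma_{\downarrow}\setminus\Sigma$, and since no element of $\{s\}\cup\{a_j\}_j$ lies strictly below anything, $(\ddagger)$ fails for $C$.

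The diagnosis: your verification rests on the claim that when a bottomless maximal chain is present, $\{u\in N: u\prec c_n\}$ is ``a non-well-founded chain without minimal element'' and the second or third clause then cuts that very chain. Neither half is right: that set is not a chain (it is the set of all neutral elements below $c_n$), its failure to have a minimal element can be caused by descents entirely unrelated to the chain you need to cut, and when the second clause then fires it may adjoin an element lying above a longer initial segment of $C$ while never placing anything of $\Sigma$ on or below $C$. Your finiteness fact controls only finite antichains; after infinitely many such second-clause additions the limit $\Sigma$ is an infinite antichain covering $C$ strictly from above, which nothing in your argument excludes --- this is precisely the bad configuration of the paper's Figure 2. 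The paper's construction takes a different route (it only ever adjoins elements of $\cC\setminus\Sigma_{0\downarrow}$, taken in enumeration order, and then argues that every maximal chain already meets $\Sigma_{k\uparrow}$ and $\Sigma_{k\downarrow}$ for a single finite stage $k$); any repair of your rule must likewise guarantee that, cofinally along every such chain, an element of the chain itself or something below it eventually enters $\Sigma$.
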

       \begin{proof}
       	 A slice $\Sigma$ satisfying the condition ($\ddagger$) is called a {\it Cauchy slice}.   \\
       	 We will construct, inductively (in stages indexed by $i \in \mathbb{N}$), antichains $\{\Sigma_i\}_i$ satisfying $\Sigma_{i+1} \supseteq \Sigma_i$ and a set $\cX = \bigcup_i \cX_i$ of maximal chains in $\cC$, such that the antichain $\Sigma_i$ satisfies $(\ddagger)$ for all chains from $\cX_i$. \\  
       	  The idea of the proof is to adjoin elements from each chain in $\cX$ to $\Sigma_0$, if this is possible; 
       	  the main requirement is that the obtained antichain remains finite at each stage of the procedure; if for some 
       	  maximal chain $C \in \cX$, one cannot add an element of $C$ to the antichain, then necessarily $C$ satisfies 
       	  ($\ddagger$).    
         	  
       	  For any maximal chain $C \subset \cC$, define $C_{\prec}$ as the set of elements of $C$ strictly preceding $\Sigma_0$, and $C_{\succ}$ be the set of elements strictly succeeding $\Sigma_0$.  \\
       	  Before proceeding, let us show that:  \\
       	  $(\dagger)$   $C \neq C_\prec$ and $C \neq C_\succ$ for any maximal chain $C$.    \\ 
       	  To see this, note that assuming $C = C_\prec$ we obtain a contradiction with the maximality of $C$:  
       	  Let $x \in C = C_\prec$. For all $z \in \Sigma_0$ we have $z \notin C$ (by assumption). By local finiteness of $\cC$ the interval $[x,z]$ is finite (or empty) for all such $z$; since $C = C_\prec$ by assumption we obtain $C \cap x_\uparrow \subset \bigcup_{z \in \Sigma_0} [x,z]$, hence finite as $\Sigma_0$ is finite; let $x_0$ be a maximal element in $C \cap x_{\uparrow}$ (such an element exists as $C \cap x_{\uparrow}$ is finite) and let $z_0$ be any element in $\Sigma_0 \cap x_{0\uparrow}$, then $C \cup \{z_0\}$ is a chain which strictly contains $C$; this is a contradiction with $C$ being a maximal chain.  \\
       	  	The case where $C = C_\succ$ is handled similarly.  
       	  	
       	  Now consider the case where $\Sigma_{0\downarrow } = \cC$, (so in particular $\Sigma_0$ consists of maximal elements). Here $C = C_\prec \cup (C \cap \Sigma_0)$ (any element of $C$ strictly precedes $\Sigma_0$ or is in $\Sigma_0$) for any maximal chain $C$;
       	   by the previous argument we obtain $C \cap \Sigma_0 \neq \emptyset$. \\
       	  Similarly, if $\Sigma_{0\uparrow} = \cC$ then $C \cap \Sigma_0 \neq \emptyset$ for any maximal chain $C \subset \cC$.
       	  
       	  We will assume henceforth that $\Sigma_{0\downarrow } \neq \cC$ and $\Sigma_{0\uparrow} \neq \cC$.  \\
       	  Let us define $\Sigma_0^+ := \cC \setminus \Sigma_{0\downarrow}$ (then $\Sigma_0^+ \neq \emptyset$ by hypothesis) and $\Sigma_0^- := \cC \setminus \Sigma_{0}^+ = \Sigma_{0\downarrow}$. 
       
       	   Write
       	  $$\Sigma_0^- = \{x_1, x_2, \ldots \}, \Sigma_0^+ = \{ y_1, y_2, \ldots\}, $$
       	  and let
       	   $$ \cZ = \{z_1, z_2, \cdots  \}$$ 
       	   be an enumeration of $\cZ : = \Sigma_0^- \times \Sigma_0^+$; elements $z_i$ are of the form $z_i = (x_k, y_\ell), x_k \in \Sigma_0^-, y_\ell \in \Sigma_0^+$ (here $(\cdot, \cdot)$ denotes a \textit{tuple} and not an interval!). \\  
       	  The steps of the proof are as follows:
       	  \begin{enumerate} [1.]
       	  	\item For $i=0$: let $\cX_0$ be the set of maximal chains $C$ for which $C_\prec$ and $C_\succ$ are nonempty (in which case $\ddagger$ (with $\Sigma$ replaced by $\Sigma_0$) holds for $C$).
     
       	  	\item For $i=1$:  denote by $\cX_1$ the set of maximal chains in $\cC$ containing the elements $x_{m_1} \in \Sigma_0^-, y_{n_1} \in \Sigma_0^+, z_1 =(x_{m_1}, y_{n_1})$ ($m_1, n_1$ are fixed). 
       	  	\begin{enumerate} [i.]
       	  		\item If $y_{n_1}$ is incomparable to every element in $\Sigma_0$, then we set $\Sigma_1 = \Sigma_0 \cup \{y_{n_1}\}$.   
       	         \item 	Otherwise, any chain containing $x_{m_1}, y_{n_1}$ will satisfy $(\ddagger)$ since $x_{m_1} \prec z \; \& \; t \prec y_{n_1}$ for some $z,t \in \Sigma_0$ by the assumptions on $x_{m_1}, y_{n_1}$. In this case we set $\Sigma_1 = \Sigma_0$.     	  	
       	  	\end{enumerate}
       	  	\item Assume that we have thus obtained $\cX_i$ and $\Sigma_i$ for all $i \leq k$. Let $i=k+1$: for any maximal chain $C$, let again $C_\prec$ (respectively $C_\succ$) be the set of elements of $C$ strictly preceding (respectively succeeding) $\Sigma_k$. 
 
       	  	Denote by $X_k$ the set $\Sigma_0^+ \setminus (\Sigma_{k\uparrow} \cup \Sigma_{k\downarrow})$ (so $X_k$ is the set of elements incomparable to $\Sigma_k$). 
       	  	\begin{enumerate}[i.]
       	  	 \item If $X_k = \emptyset$ then we set $\Sigma = \Sigma_k$. If an element is incomparable to $\Sigma_k$ then it is also incomparable to $\Sigma_0$ since $\Sigma_0 \subseteq \Sigma_k$; the assumption $X_k = \emptyset$ then implies that $\cC = \Sigma_{\downarrow}  \cup \Sigma_{\uparrow}$. In this case the induction stops at stage $k$ and $\Sigma$ is the sought antichain. For later convenience we also set $\Sigma_j = \Sigma$ for all $i \geq k$. Let $C$ be any maximal chain. We have: $C \neq C_\prec$ whence $ C \cap \Sigma_\uparrow \neq \emptyset$; it follows that $(\ddagger)$ holds for $C$ and $\Sigma$. 
       	  	 \item Otherwise, let $\ell$ be minimal such that $z_\ell = (x_{m_\ell}, y_{n_\ell})$ and $y_{n_\ell} \in X_k$.  
       	    	In this case we set $\Sigma_{k+1} :=  \Sigma_k \cup \{y_{n_\ell}\}$. 
       	  	\end{enumerate}
         	For later reference, observe that $X_{k+1} \subset X_k \subset \Sigma_0^+$, since if an element is incomparable to $\Sigma_{k+1}$ then it is also incomparable to $\Sigma_k$ as $\Sigma_k \subseteq \Sigma_{k+1}$. 
            \begin{figure}[!h]
            	\begin{center}
            		\includegraphics[scale=4.,height=6.cm,width=10.cm]{./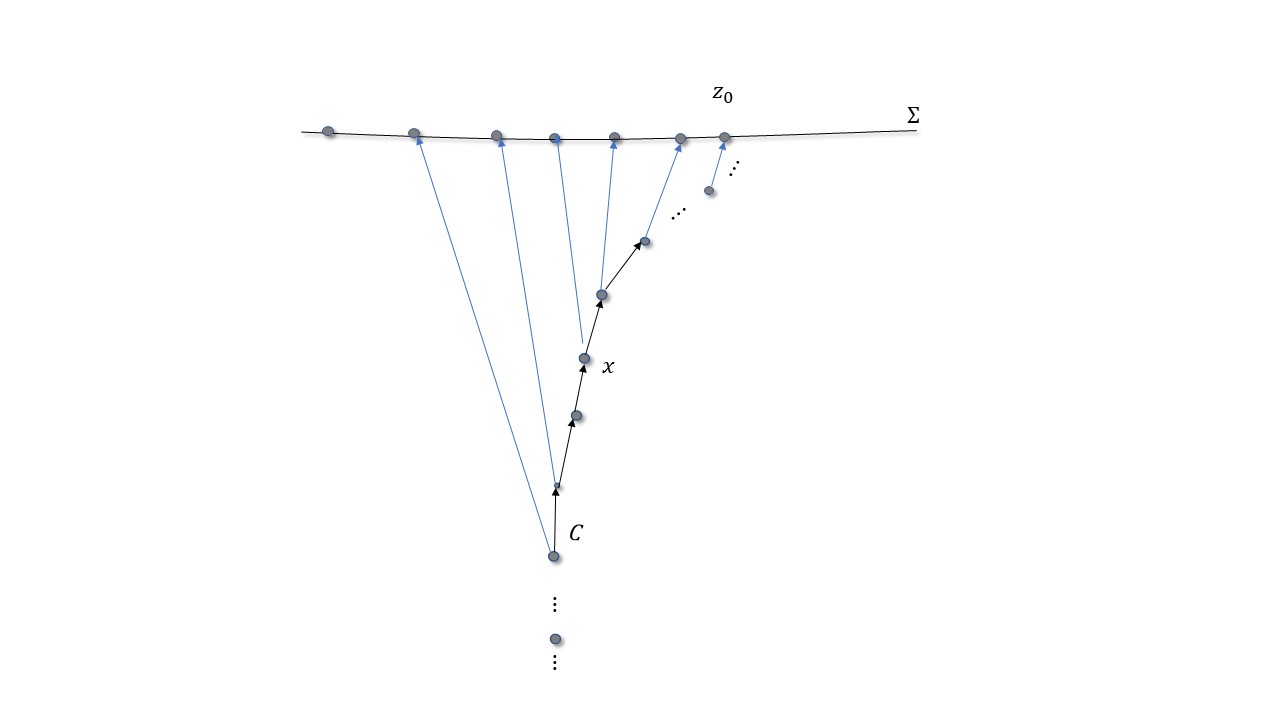}
            		\caption{A finite antichain $\Sigma$ and an infinite chain $C$ whose elements lie to the past of $\Sigma$. The causal space $\Sigma \cup C$ (where the order relation is the transitive closure of the arrows shown in the figure) does not satisfy local finiteness and hence is not a causal set. }
            	\end{center}
            \end{figure}       
             \begin{figure}[!h]
            	\begin{center}
            		\includegraphics[scale=4.,height=6.cm,width=10.cm]{./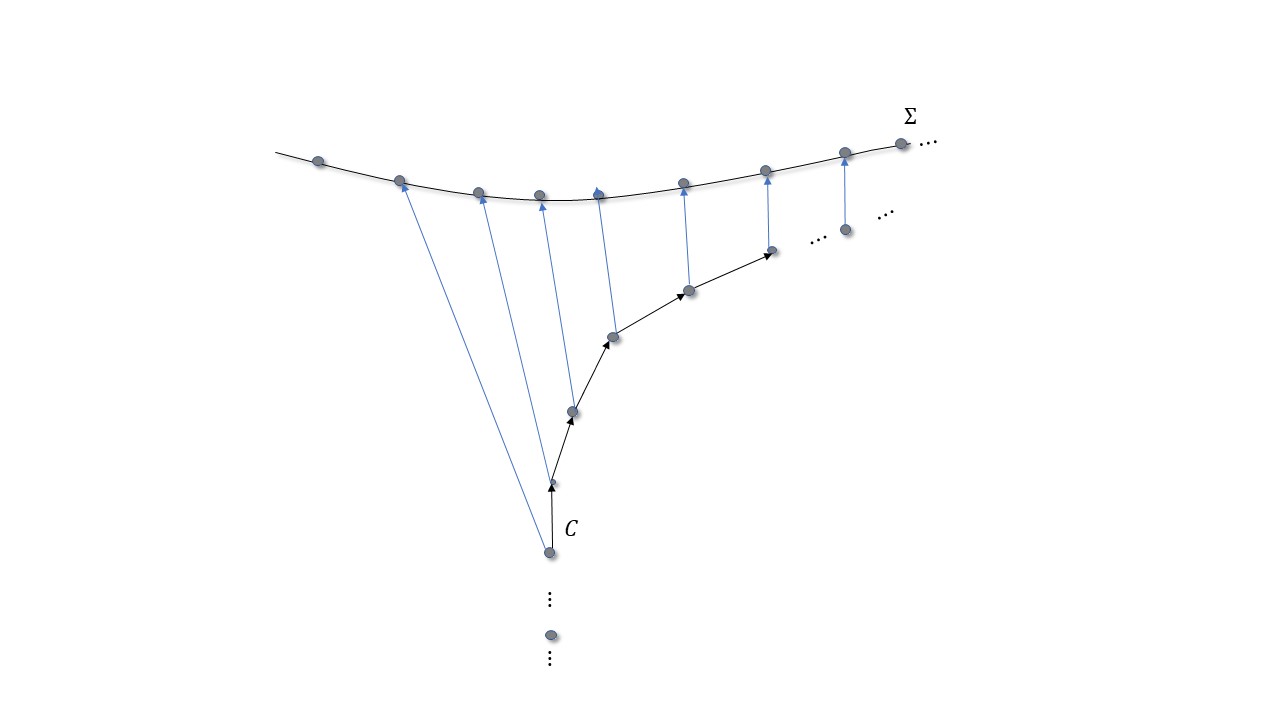}
            		\caption{An antichain $\Sigma$ and a maximal chain $C$ whose elements lie to the past of $\Sigma$. Both $\Sigma$ and $C$ are infinite. The causal space $\Sigma \cup C$ satisfies local finiteness, but the requirement $(\ddagger)$ is not met. }
            	\end{center}
            \end{figure}             
       	    We let $\cX_{k+1}$ be the set of maximal chains containing the elements $x_{m_j}, y_{n_j}$ for some $j \in \{1, \cdots, \ell\}$; then by the hypothesis on $x_{m_j}, y_{n_j}, j=1, \cdots, \ell$, $(\ddagger)$ holds for all chains in $\cX_{k+1}$ and the antichain $\Sigma_{k+1}$. 
       	    \item Observe that if $\Sigma \subset \Sigma'$ are antichains, with $(\ddagger)$ holds for $\Sigma$ and some chain $C$, then necessarily $(\ddagger)$ holds also for $\Sigma'$ and $C$. It follows that in step $k+1$ above, $(\ddagger)$ holds for the antichain $\Sigma_{k+1}$ and all chains in $\bigcup_{i \leq k+1} \cX_{i}$.
       	    \item  Let now $\Sigma := \bigcup_{k \in \mathbb{N}} \Sigma_k$, and $\cX= \bigcup_{k \in \mathbb{N}} \cX_k$. Then we show that
       	    $(\ddagger)$ holds for $\Sigma$ and all elements in $\cX$. Any element (a chain) $C$ of $\cX$ will belong to some $\cX_k$, hence $(\ddagger)$ holds for  $C$ and $\Sigma_k$ (by the above arguments), thus it will hold for $C$ and $\Sigma$ since $\Sigma\supset \Sigma_k$.  
       	    \item Note that by construction any element in $\Sigma_0^+ $ is now comparable to $\Sigma$, since it will be added to $\Sigma_{k\uparrow}$ at some stage.  
       	    \item Note that at each stage $k$, the antichain $\Sigma_k$ is finite; hence the reasoning done above (under $(\dagger)$) can be repeated with $\Sigma_0$ replaced by $\Sigma_k$. Let $C$ be a maximal chain; if $C$ intersects $\Sigma_\downarrow$ and $\Sigma_\uparrow$, there is nothing left to prove. Otherwise, assume $(\ddagger)$ does not hold for $C$ and $\Sigma$. Since $C$ cannot lie entirely to the past or future of $\Sigma_k$ ($\forall k \geq 0$), $C$ will intersect both $\Sigma_{k\uparrow}$  and $\Sigma_{k\downarrow}$ (for some $k$) by the preceding item, hence $(\ddagger)$ will be satisfied for $C$ and $\Sigma_k$; by the above considerations $(\ddagger)$ will be satisfied for $C$ and $\Sigma$.     
       	       \\       	    
       	      We have thus exhausted all possible cases of maximal chains, and $\Sigma$ as defined is a Cauchy slice as required.     	    
       	  \end{enumerate}          
       	\end{proof}      
       \begin{figure}[!h]
       	\begin{center}
       		\includegraphics[scale=4.,height=6.cm,width=10.cm]{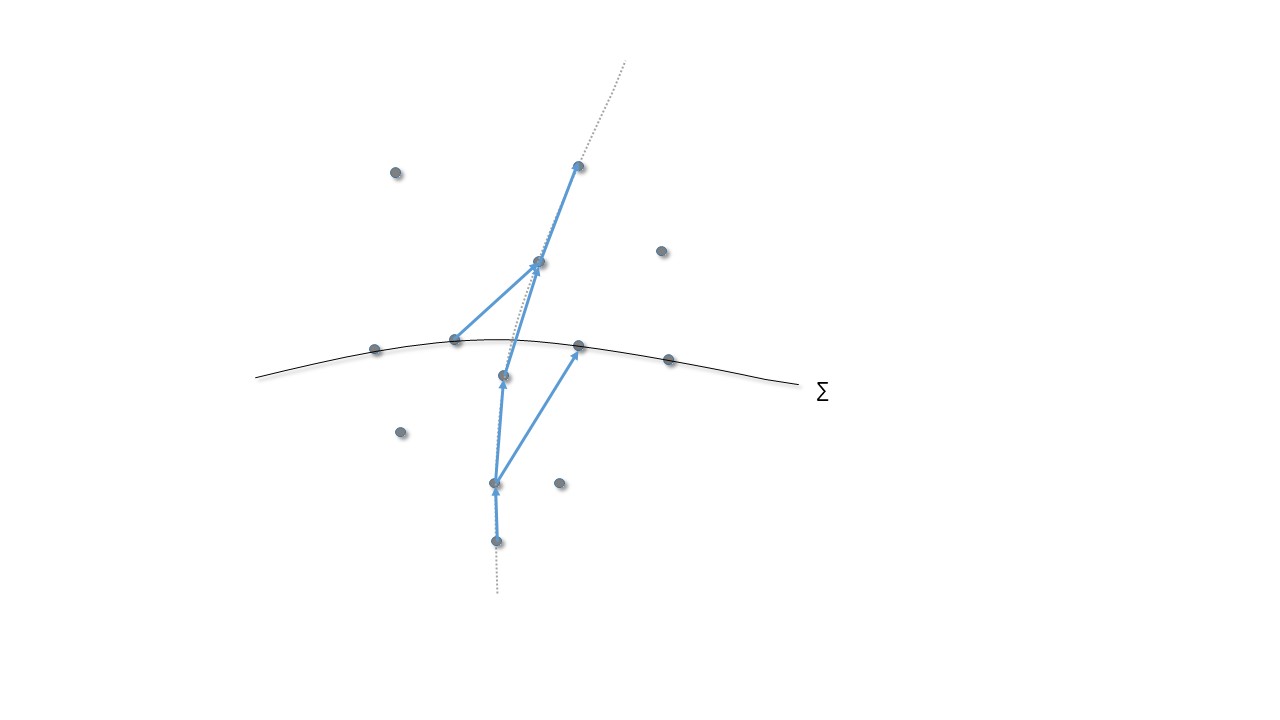}
       		\caption{A chain and an antichain satisfying condition $(\ddagger)$.}
       	\end{center}
       \end{figure}     
       \begin{prop} \label{aux}
         Let $\Sigma_\ell, \ell \in I \subset \mathbb{Z}, I \neq \emptyset$ be a sequence of disjoint antichains in a given causet $\cC$, such that, for no $i \neq j$, $x \prec y \; \& \; z \prec t$ with $x, t \in \Sigma_i$ and $y, z \in \Sigma_j$.  \\
         Then there exists a foliation $\cF$ of $\cC$ by spacelike slices such that each $\Sigma_\ell$ is a subset of some slice of $\cF$. \\
         If, furthermore, there are Cauchy slices extending each of the $\Sigma_\ell$'s, then there exists a temporal foliation of $\cC$ satisfying the above condition (i.e. each $\Sigma_\ell$ is a subset of some slice of $\cF$).   
        \end{prop}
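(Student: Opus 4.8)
The plan is to extract from the non-crossing hypothesis the fact that the family $\{\Sigma_\ell\}$ is ``stacked,'' then to build the foliation by a stage-by-stage construction in the spirit of the proof of Proposition \ref{prelim}, and finally, for the second assertion, to feed Proposition \ref{prelim} into the construction so that every slice produced satisfies $(\ddagger)$. First I would record the key consequence of the hypothesis: \emph{for $i \neq j$ either $\Sigma_i \ll \Sigma_j$, or $\Sigma_j \ll \Sigma_i$, or $\Sigma_i \cup \Sigma_j$ is an antichain.} Indeed, if $\Sigma_i \cup \Sigma_j$ is not an antichain there is a comparable pair across the two, say $x \precneqq y$ with $x \in \Sigma_i$ and $y \in \Sigma_j$; then for any $z \in \Sigma_i$ the hypothesis forbids $w \precneqq z$ for $w \in \Sigma_j$ (that would be a crossing with the pair $x \prec y$), so $z$ strictly precedes an element of $\Sigma_j$ or is incomparable to all of $\Sigma_j$ — and since $\Sigma_i \cup \Sigma_j$ is not an antichain this says precisely $\Sigma_i \ll \Sigma_j$. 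A companion remark used repeatedly below is that because each $\Sigma_\ell$ is an antichain, no element of $\cC$ can lie simultaneously strictly above and strictly below $\Sigma_\ell$; hence every $p \in \cC$ has a well-defined position (``strictly below,'' ``level with or incomparable to,'' ``strictly above'') relative to each $\Sigma_\ell$, and these positions are mutually compatible across $\ell$ by the trichotomy.

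For the spacelike foliation, note that $\cC$ and $I$ are countable; I would enumerate them and build an increasing sequence of finite approximations, each a ``skeleton'' consisting of finitely many finite antichains $A^{(n)}_1 \ll \cdots \ll A^{(n)}_{k_n}$ together with a placement of $p_1,\dots,p_n$ and of the part of each $\Sigma_\ell$ seen so far, subject to: each $A^{(n)}_j$ is an antichain; the displayed $\ll$-relations hold and are consistent with \emph{all} order relations among the placed elements; and each $\Sigma_\ell$ is being assembled inside a single block. At stage $n$ one inserts $p_n$: by the trichotomy (applied to the finitely many placed points) $p_n$ is either incomparable to every element of a unique current block, in which case it joins that block, or else it determines a gap in the skeleton where a new block $\{p_n\}$ is opened; one also interleaves, for each $\ell$, the placement of a new element of $\Sigma_\ell$ into its block. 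Setting $\Sigma_j := \bigcup_n A^{(n)}_j$ and checking transitivity and totality of $\ll$ on the resulting blocks, together with $\bigcup_j \Sigma_j = \cC$, produces the desired spacelike foliation with each $\Sigma_\ell$ inside one slice.

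For the temporal statement, replace each $\Sigma_\ell$ by a given Cauchy slice $\widehat{\Sigma}_\ell \supseteq \Sigma_\ell$ and run the same construction, except that whenever a block is newly opened one first enlarges the finite antichain placed there to a Cauchy slice by applying Proposition \ref{prelim} \emph{inside the sub-causet cut out by the two neighbouring slices} (this sub-causet is again countable and locally finite, so the proposition applies), so that the added elements automatically respect the surrounding $\ll$-order; passing to the limit then yields a foliation all of whose slices satisfy $(\ddagger)$, i.e.\ a temporal foliation. I expect the main obstacle to be precisely this bookkeeping: one must maintain simultaneously that every block stays an antichain, that the blocks stay totally and transitively $\ll$-ordered, and that nothing is lost ``between'' blocks, and in the temporal case one must verify that the Cauchy enlargements furnished by Proposition \ref{prelim} can be localized so as not to disturb the order already fixed — which is why the proposition must be re-applied within the relevant order-interval rather than to all of $\cC$.
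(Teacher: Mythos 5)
Your route is genuinely different from the paper's: the paper does not build the foliation by hand but applies Zorn's lemma to the poset of \emph{partial} foliations $F$ of $\cC$ subject to the constraint that no two distinct slices of $F$ meet the same $\Sigma_\ell$, shows every $\sqsubseteq$-chain of such partial foliations is bounded by the partial foliation formed from unions along $\subset$-chains of slices, and then cites the proof of Theorem (2.3) of [B-Z] both for the fact that a maximal element is a total foliation and for the temporal refinement. Your opening reduction --- for $i \neq j$, either $\Sigma_i \ll \Sigma_j$, or $\Sigma_j \ll \Sigma_i$, or $\Sigma_i \cup \Sigma_j$ is an antichain --- is correct and makes explicit a consequence of the no-crossing hypothesis that the paper leaves implicit. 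What your enumeration-based construction buys is self-containedness; what it costs is that it only works when $\cC$ is countable, which the proposition does not assume (the Zorn argument is countability-free). In the applications in this paper the causets are countable, so that is a restriction rather than an error.

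Two steps in your sketch are genuine gaps. First, in the insertion step, $p_n$ need not be incomparable to a \emph{unique} current block, and the choice of where to put it is not innocent: by the definition of $\ll$, a foliation requires the union of any two distinct slices to fail to be an antichain, so a greedy placement (e.g., opening a new singleton block for a point incomparable to everything placed so far) can produce, in the limit, two slices that are mutually incomparable, on which $\ll$ is then not total. You need an explicit rule forcing eventual comparability of distinct blocks; this is precisely the role played by maximality of $F_{\rm max}$ in the paper's argument. Second, and more seriously, the temporal case: applying Proposition \ref{prelim} ``inside the sub-causet cut out by the two neighbouring slices'' founders on two points. Neighbouring slices need not exist, since the $\ll$-order on slices can be dense (in the paper's application it is order-isomorphic to $\mathbb{Q}$). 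And even when an order-interval is available, a slice satisfying $(\ddagger)$ for all maximal chains \emph{of the sub-causet} need not satisfy $(\ddagger)$ for all maximal chains of $\cC$: a chain inextendible in $\cC$ may meet the sub-causet in a chain that is not maximal there, or may miss it entirely, so the Cauchy property does not localize in the way your argument needs. The condition $(\ddagger)$ is global in $\cC$, which is why the paper handles the temporal refinement globally rather than interval by interval.
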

        \begin{proof}
          The proof will be similar to that of Theorem (2.3) of \cite{B-Z}.  \\          
           We proceed as follows. Let $\cF$ be the set of partial foliations $F$ of $\cC$, satisfying the following:  
           
           $(\dagger)$ For any $\Sigma, \Sigma' \in F, \Sigma \neq \Sigma'$, if $\Sigma \cap \Sigma_\ell \neq \emptyset$ (for some $\ell$) then $\Sigma' \cap \Sigma_{\ell} = \emptyset$.   
           
            Theorem (2.3) of \cite{B-Z} (and its proof, with slight changes) is then just a special case.   \\           
           The set $\cF$ is non-empty (for $\{\Sigma_{\ell_0}\}$, $\ell_0 \in I$, is a partial foliation in $\cF$), and any $\sqsubseteq$-chain in $\cF$ is bounded: let $(F_i)_i$ be a $\sqsubseteq$-chain in $\cF$, then $F_{\rm sup}$ is the partial foliation composed of unions of $\subset$-chains of slices in $F_i$, and $F_{\rm sup}$ belongs to $\cF$. Hence $\cF$ admits at least a maximal element. Let $F_{\rm max}$ be such, then $F_{\rm max}$ is a total foliation by the proof of Theorem (2.3) of \cite{B-Z}.  
         
          In case the $\Sigma_\ell$'s can each be extended to a Cauchy slice, then by the end of the aforementioned proof there exists a temporal foliation which satisfies the required properties.           
        \end{proof}
	\section{Main result and its proof}  \label{foliat}
	We consider a $C^\infty$ Lorentzian $n$-dimensional manifold $\cM$, $n \geq 2$. We assume $\cM$ is equipped with a metric $g$ of signature $(-,+,\dots,+)$ and that $\cM$ is time-oriented.   \\
	Our goal in this section is to provide a new proof of the following theorem:
	\subsection{Theorem}   \label{main-1}
		\textit{Let $\cM$ be a globally hyperbolic, geodesically complete spacetime. \\
		Let $\Sigma$ be a $C^k$-acausal compact subset of $\cM$ ($k \geq 1$) which is crossed at most once by any inextendible timelike curve. Then there exists a foliation $\cF$ of $\cM$ by $C^0$-acausal hypersurfaces foliating $\cM$ such that one of the slices (of the foliation) contains $\Sigma$.}
	\subsection{Outline} 
		We will construct a sequence of open coverings $(\cU_k)_{k \in \mathbb{N}}$ of $\cM$ which satisfy the following properties: 
		\begin{enumerate} [1.]
			\item The covering $\cU_k$ consists of open causal diamonds, i.e. sets of the form $\cD(p,q):= I^+(p) \cap I^-(q)$. 
			\item For each $k$, $\cU_k$ is locally finite, i.e. for every $x \in \cM$ there exists a neighbourhood $U \ni x$ which intersects finitely many sets of $\cU_k$. In particular each $\cU_k$ is countable. 
			\item For each $k$, $\cU_{k+1}$ is a refinement of $\cU_k$, i.e. every element of $\cU_{k+1}$ is a subset of some element of $\cU_k$;
			\item Let $\overline{\cU}_k$ be the family $\{ \overline{U}| U \in \cU_k\}$ (with $\overline{U}$ being the closure of $U$ in the manifold topology). Then the intersection $\bigcap_k \overline{U}_k$ of any nested sequence $U_{k+1} \subset U_k$ 
			($(\overline{U}_k)_k, U_k \in \cU_k$), is a singleton.  
			\item Any point $z \in \cM$ lies in some intersection of the above form.
		\end{enumerate}
	    The above can be done in the following way: the manifold $\cM$ is covered by the family of open diamonds $\cD(p, q), p, q \in \cM, p < q$. By paracompactness of $\cM$ this cover has a locally finite refinement $\cU''_0$. In order to obtain a locally finite cover by open diamonds, note that any element of $\cU''_0$ has compact closure, hence can be covered by finitely many open diamonds of the above form; for each $U \in \cU''_0$ fix such a cover $(\cD_i)_{i \in I_U}$; replacing each $U$ by $(\cD_i)_{i \in I_U}$ we obtain a cover $\cU'_0$; note that $\cU'_0$ is also locally finite. \\ 
	    We can further remove any redundant sets from $\cU'_0$: discard any set $U \in \cU'_0$ which satisfies $U \subset \bigcup_{i \in I} V_i$ for $V_i \in \cU'_0, i \in I$ (for some finite $I$). This process terminates by local finiteness of $\cU'_0$. The resulting cover, denoted $\cU_0$, is locally finite as desired.    \\  
	    
	     Similarly, we may consider successive strict refinements of $\cU_0$ to obtain $\cU_1, \cU_2, \cdots$. \\
	      In order to ensure that item 4. above holds we may require that, for each $\cD \in \cU_k$,  
	    \begin{equation} \label{estimate}
	    	 \delta(\cD) \lessapprox 2^{-k}    \qquad    \qquad  \qquad \qquad  \textrm{in dimensionless units} 
	    \end{equation}
        where $\delta(\cD)$ is the maximum timelike separation of any two points in $\overline{\cD}$. \\
        Now, as $\lim_k V(\cD_k)_k=0$ (for any nested sequence of diamonds $(\cD_k)_k$), the sequence $(\overline{\cD}_k)_k, \cD_k \in \cU_k, \cD_{k+1} \subset \cD_k$ will satisfy $\bigcap_k \overline{\cD}_k = \{z\}$ for some $z \in \cM$. \\
         For a 4-dimensional spacetime, the above consideration can be made more transparent by recalling the approximate formula
        $$ V(\cD(p,q)) \approx \frac{\pi}{24} \tau^4 $$
        for the spacetime volume of the causal diamond $\cD(p,q)$, provided $p,q$ are sufficiently close (and $p \precneqq q$). In fact in this case there exists a unique timelike geodesic $\gamma(t)$ parametrized by proper time $t \in [-\frac{\tau}{2}, \frac{\tau}{2}]$ joining $p$ and $q$. Hence the requirement on the timelike separation of $p,q$ is translated to a corresponding requirement on $V(\cD(p,q))$.  More generally, for an $n$-dimensional spacetime we have (for $p, q$ sufficiently close)
        $$ V(\cD(p,q))  \approx  \textrm{vol}(S^{n-2}) \frac{2}{n(n-1)} \big(\frac{\tau}{2}\big)^n   $$  
        where $S^{n-2}$ is the $(n-2)$-sphere. \\
        Finally, for any $z \in \cM$ choose, for each $\ell$ some $\cD_\ell \in \cU_\ell$ such that $z \in \cD_\ell$.  \\
        By the claim above, the intersection of the $\overline{\cD}_\ell$'s is a singleton (namely $\{z\}$).
        
        In what follows we will continue to denote $x \in J^-(y)$ by $x \prec y$, for any $x, y \in \cM$. This applies in particular when considering a discrete set $X \subset \cM$ 
        in which case the order on $X$ is the induced causal order. 
        \subsubsection{Main requirements}  \label{require}
		Using the above construction, we will then obtain an increasing sequence of nested causets $(\cC_k)_k$ ($\cC_k \subset \cC_{k+1}$) 
		and temporal foliations $F_k$ ($F_k \sqsubseteq F_{k+1}$) of $\cC_k$ such that:
		\begin{enumerate} [a.] 
		\item For each $k$, an element $U_{ik} \in \cU_k$ contains $N_{ik}$ elements $x_{ik\ell} \in \cC_k$, $\ell =1, \dots, N_{ik}$, where $N_{ik} < \infty$ satisfying some suitable constraints (see below); 
		\item If $U_{ik} \cap \Sigma \neq \emptyset$ then $x_{ik\ell} \in \Sigma$ for some $\ell$; 
		\item For any $x_{ik\ell}$ lying to the past or/and future of $\Sigma$, $x_{ik\ell}$ also lies to the past or/and future of $\Sigma \cap \cC_k$ (i.e. there exists some $z \in \Sigma \cap \cC_k$ for which $x_{ik\ell} \prec  z$ or/and $z \prec x_{ik\ell}$ respectively);  
		\item Let $p, q \in \cC_k, p \neq q$, with $p$ being a direct predecessor of $q$ (i.e. $\nexists x(p \precneqq x \precneqq q$)). Let $\{S_{k\ell}| \ell =1, \dots, N\}, S_{k1} \ll \cdots \ll S_{kN}$ be the set of $F_k$-slices $S$ (i.e. $S \in F_k$) for which $p \precneqq z, t \precneqq q$ for some $z,t \in S$. Then the set $\overline{\cD(p,q)} \cap \cC_{k+1}$ contains elements $y_1 \prec \cdots \prec y_N \prec y_{N+1},  z_\ell, t_\ell$ such that $y_\ell \prec z_\ell$ and $t_\ell \prec y_{\ell+1}$ for some $z_\ell, t_\ell \in S_{(k+1)\ell}, \ell =1, \cdots , N$ ($S_{(k+1)\ell}$ being an $F_{k+1}$-slice). 
		\end{enumerate}
	    More succinctly, the condition $p \prec z, t \prec q$ for some $z,t \in S$ for an $F_k$-slice $S$ is termed '$S$ intersects $\cD(p,q)$'.  \\  
	    Using \ref{estimate} as well as the requirements above one can get a rough estimate of the timelike separation between two events $p', q' \in \cD(p,q)$, $p' \in I^-(q')$. \\
	    However, for our purposes it is necessary to provide a more precise estimation that holds for all elements of a given causet $\cC_k$ (for sufficiently high sprinkling density, or equivalently large $k$; here the 'sprinkling' is not completely random since it should always satisfy the above requirements (a through d)); more precisely, we have:
	    \begin{ass}
	    	Without loss of generality, the causets $\{\cC_k| k \in \mathbb{N}\}$ can be constructed so that:
	    	\begin{itemize}
	    		\item the timelike separation between two points $p, q, p  \in I^-(q)$ satisfies 
	    		\begin{equation} \label{estimate-final}
	    			   \tau(p,q) \approx \frac{N}{m_n \rho^{1/n}D_n^{1/n}}
	    		\end{equation} 
	    		(provided $p$ and $q$ are sufficiently close), where $N$ is the length of a maximal chain connecting $p$ to $q$, $\rho$ is the sprinkling density and $D_n, m_n$ are constants related to the dimension of $\cM$; 
	    		\item the length of a maximal $\cC_{k+1}$-chain connecting $p$ to $q$ (with $p, q \in \cC_k$) is bounded below by the number of $F_k$-slices intersecting $\cD(p,q)$. 
	    	\end{itemize} 
	    \end{ass}
	    \begin{proof}
	    %	For the first assertion, one can use, e.g., Theorem (1.1) of ~\cite{B}. Since it is stated that 'with high probability'
	 	    The first assertion is equation (3.1) in equation ~\cite{RW} (see also the discussion below Theorem (1.1) (in ~\cite{B}) reproduced in the appendix).   \\
	        The second assertion follows immediately from requirement (d) of \ref{require}: let $p :=p_0, p_1, \cdots, p_s=: q$ be a maximal chain (of maximal length) joining $p$ to $q$ in $\cC_k$, then the number of $F_k$-slices intersecting $\cD(p,q)$ is easily obtained by counting the slices intersecting $\cD(p_i, p_{i+1})$ for $i=0, \cdots, s-1$. 
	        By requirement (d), we obtain a new maximal chain by including all the $y_\ell$'s for each interval $[p_i, p_{i+1}]$. For a given interval, the length of the maximal chain (in $\cC_{k+1}$)  joining $p_i$ to $p_{i+1}$ is equal to the number of $F_k$-slices intersecting $\cD(p_i, p_{i+1})$. The statement then follows.  
	    \end{proof}    
	    The above requirements are met by proceeding inductively as follows: start with $k=0$. 
	    \subsubsection{Obtaining $\cC_0$ and $F_0$:}  \label{C_0}
	     Let $\{U_{i0}| i \in I_0 \subseteq \mathbb{N}\}$ be an enumeration of the cover $\cU_0$.  \\
	     For any $U_{i0} \in \cU_0$ we will choose $N_{i0}$ elements in $U_{i0}$; these elements will be denoted as $x_{i0\ell}$ ($\ell =1, \cdots, N_{i0}$); if $U_{i0}$ meets $\Sigma$, then some of the $N_{i0}$ elements in $U_{i0}$ will belong to $\Sigma$, and other  points lie to the future or past of these points.   \\ 
	       At this stage, $N_{i0}$ is just a nonzero natural number (except when $U_{i0}$ meets $\Sigma$, in such case it must be $\geq 3$ (at least one point is chosen on $\Sigma$, and two points in its past and future respectively)).  \\
	        Set  
	    $$  \cC_0 := \bigcup_{i \in I_0} \{x_{i0\ell}| \ell = 1, \cdots, N_{i0}\}.      $$ 
	    Since $\cU_0$ is locally finite, $\cC_0$ is locally finite (as a causal space): If $p, q, p \precneqq q$ are elements of $\cC_0$, the diamond $\overline{\cD(p,q)}$ is compact and is covered by finitely many elements of $\cU_0$, hence there are only finitely many elements of $\cC_0$ in the interval $[p,q] := \overline{\cD(p,q)} \cap \cC_0$, hence the claim. 
	    
		Let $\Sigma_0$ be the set of $x_{i0\ell}$ which are in $\Sigma$.   \\	
		The open sets $(U_{i0})_{i \in I_0}$ (for some set $I_0$) cover $\Sigma$. By compactness of $\Sigma$ we may extract a finite subcover $\{U_{j0}| j \in J\}$. As the cover $\cU_0$ is locally finite, the family of open $U_{i0}$ having nonempty intersection with $\Sigma$ is finite, hence $\Sigma_0$ is finite. 
		
		By (the proof of) proposition \ref{prelim} it is possible to extend $\Sigma_0$ to a {\it Cauchy slice} $\Sigma'_0$ of $\cC_0$. Let again $I'_0$ be the set of indices of open sets $U_{i0} \in \cU_0$ meeting $\Sigma'_0$, and let $X_0 := (\cC_0 \cap \bigcup_{i \in I'_0} U_{i0}) \setminus \Sigma'_0$. \\	
		 Let also $X_0^+ = X_0 \cap (\Sigma'_0 \cup \Sigma)_{\uparrow}$ and $X_0^- = X_0 \cap (\Sigma'_0 \cup \Sigma)_{\downarrow}$.
 		Fix some $a \in X_0^-$ (since $X_0^-$ is nonempty (by construction)). Then $\{a\}$ is an antichain. Apply Proposition \ref{aux} to $X_0^-$ (with $\{a\}$ playing the role of the $\Sigma_\ell$'s) to obtain a partition of $X_0^-$ into slices $\Gamma_{0-\ell}$. Similarly, apply \ref{aux} to $X_0^+$ (since $X_0^+$ is nonempty by construction) to obtain a partition of $X_0^+$ into Cauchy slices $\Gamma_{0+\ell}$. % In case where $X_0^-$ or $X_0^+$ is empty   
		\begin{figure}[!h]
			\begin{center}
				\includegraphics[scale=3.,height=7.cm,width=10.cm]{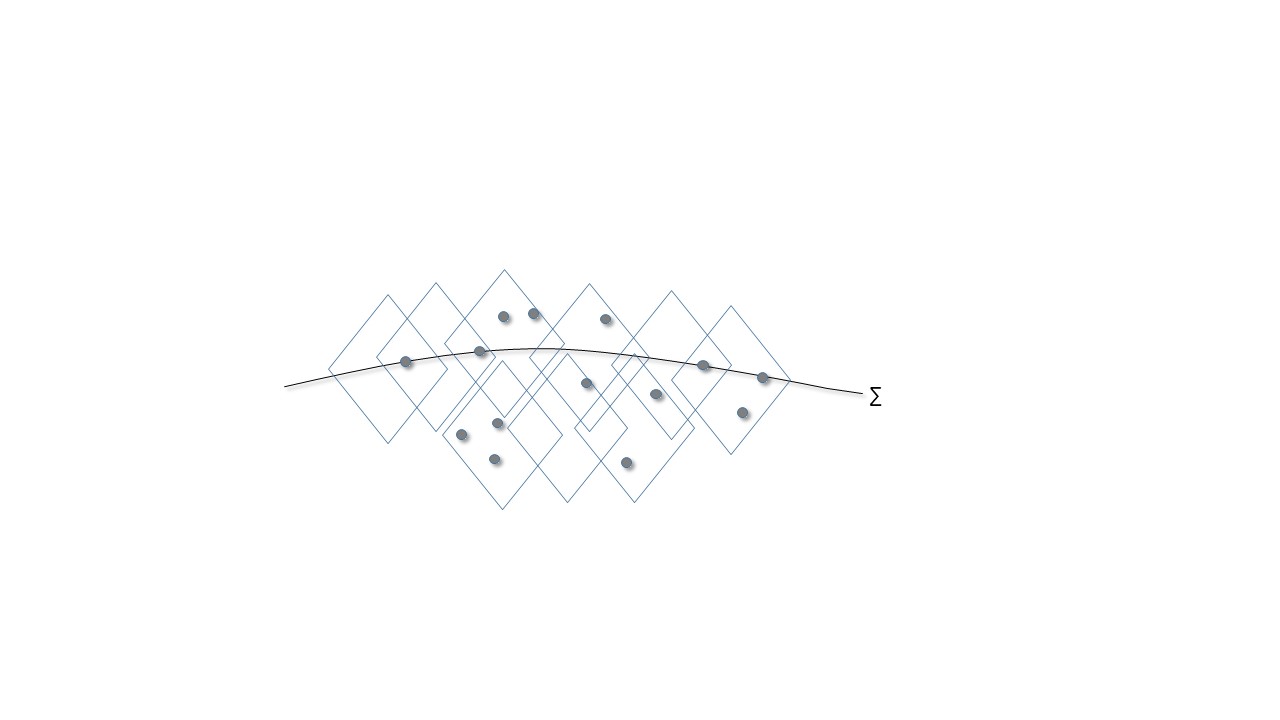}
				\caption{Points 'sprinkled' on $\Sigma$ and in its past and future developments.}
			\end{center}
		\end{figure}
		Apply proposition \ref{aux} again (with the antichains $\Sigma'_0, \Gamma_{0-\ell}, \Gamma_{0+\ell}$'s playing the role of $\Sigma_\ell$'s) to get a temporal foliation $F_0$ of $\cC_0$ such that $\Sigma_0'$ is a subset of some slice of $F_0$. 
        
        It can be seen that items (a, b, c) above are met for ($k=0$) (item d concerns only $k \geq 1$).  
        \begin{figure}[!h]
        	\begin{center}
        		\includegraphics[scale=3.,height=7.cm,width=10.cm]{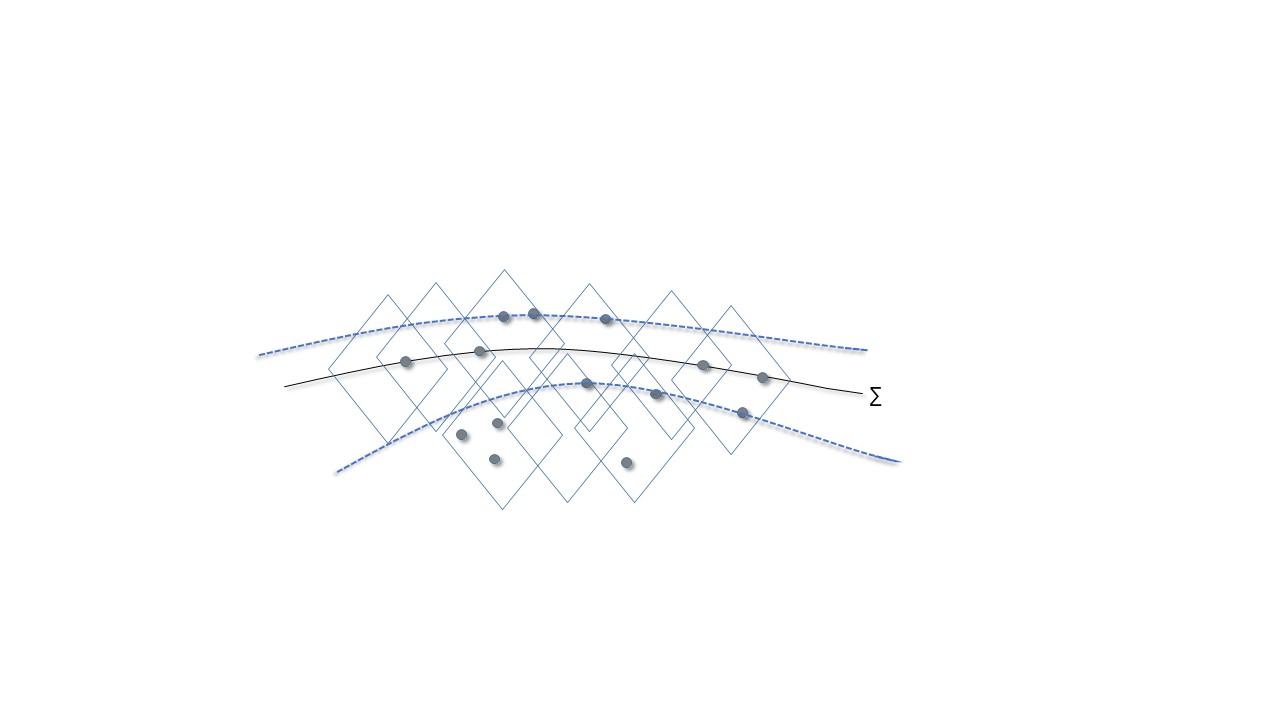}
        		\caption{The procedure described above allows to ensure that the antichain $\Sigma'_0$ and its extensions in $F_k$ do not have extraneous points (i.e. points past or future to $\Sigma$.)}
        	\end{center}
        \end{figure}        
        \subsubsection{Extending $\Sigma'_0$ to a Cauchy slice}
         A crucial step is to verify that a Cauchy slice in $F_0$ (in particular $\Sigma'_0$) remains extendible to a Cauchy slice after adding countably many elements to $\cC_0$: 
        \begin{lem}  \label{extension-Cauchy}
        	Keep the above notation. Let $\cC'_0$ be a connected causal set extending $\cC_0$ ($\cC_0 \subseteq \cC'_0$ and the order induced from $\cC'_0$ coincides with $\prec$). Then there exists a Cauchy slice $\Sigma''_0 \subset \cC'_0$ extending $\Sigma'_0$. 
        \end{lem}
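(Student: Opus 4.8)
The plan is to re-run, inside $\cC'_0$, the inductive construction that proves Proposition \ref{prelim}, this time with $\Sigma'_0$ itself in place of the finite seed antichain. Since the order on $\cC_0$ is unchanged, $\Sigma'_0$ is still an antichain of $\cC'_0$; and $\cC'_0$ is countable, being $\cC_0$ together with at most countably many new elements $Y:=\cC'_0\setminus\cC_0$. Set $T_0:=\Sigma'_0$; enumerate the elements of $\cC'_0$ that do not lie strictly below $\Sigma'_0$, together with the maximal chains through them; and at each stage adjoin to the current antichain $T_j$ the least enumerated element that is incomparable to all of $T_j$ and lies on a maximal chain not yet carrying $(\ddagger)$, stopping if no such element exists. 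Put $\Sigma''_0:=\bigcup_j T_j$. Exactly as in \ref{prelim}, each $T_j$ is an antichain (a newly adjoined element is incomparable to all of $T_j$), $\Sigma''_0$ is an antichain containing $\Sigma'_0$, and any maximal chain that acquires $(\ddagger)$ relative to some $T_j$ keeps it relative to $\Sigma''_0\supseteq T_j$.

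What must be verified is that \emph{every} maximal chain $C$ of $\cC'_0$ ends up carrying $(\ddagger)$. As in the proof of \ref{prelim}, the single obstruction is a maximal chain lying entirely strictly below (or entirely strictly above) the antichain $T_j=\Sigma'_0\cup E$ in play, $E$ finite — the analogue of the claim $(\dagger)$, which there used finiteness of $\Sigma_0$. Three observations cut this down. First, a \emph{finite} such chain $C=\{a_0\prec\dots\prec a_n\}$ is impossible: if $a_n\precneqq z$ for some $z\in T_j$ then $a_i\prec a_n\prec z$ for all $i$, so $C\cup\{z\}$ properly extends $C$, against maximality; the symmetric argument rules out a finite chain entirely above. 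Second, since $\Sigma'_0$ is a Cauchy slice of $\cC_0$, no maximal chain of $\cC_0$ lies entirely strictly below $\Sigma'_0$ (such a chain would falsify $(\ddagger)$, as $t\prec y\precneqq z$ with $t,z\in\Sigma'_0$ forces $t=z$ and then $y\in\Sigma'_0$, impossible), and a maximal chain of $\cC'_0$ contained in $\cC_0$ is a maximal chain of $\cC_0$. Third, by local finiteness of $\cC'_0$ only finitely many elements of a chain can be strictly below a given element of the finite set $E$, so the only case left open is: an \emph{infinite} maximal chain $C$ of $\cC'_0$ lying entirely strictly below $\Sigma'_0$ and using infinitely many elements of $Y$ (and the mirror case, above).

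This last case is the geometric heart of the matter, and is where global hyperbolicity of $\cM$ is needed. Because $\cC'_0$ is a locally finite discrete subset of $\cM$ with the induced causal order, an infinite chain of $\cC'_0$ escapes every compact subset of $\cM$, so $C$ lies along a future-inextendible causal curve $\gamma$; every point of $\gamma$ then lies in $J^-(\Sigma'_0)$, and (using that $\Sigma$ is compact and acausal, and the way $\Sigma'_0$ and the refinement $\cC'_0$ are sprinkled relative to $\Sigma$, as encoded in requirements (a)--(d) of \ref{require}) one reduces to showing that $\gamma$ would be confined to a set of the form $J^+(\gamma(0))\cap J^-(K)$ with $K\subset\cM$ compact; such a set is compact in a globally hyperbolic spacetime, contradicting the non-imprisonment of future-inextendible causal curves. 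With $(\dagger)$'s analogue in hand, the bookkeeping of \ref{prelim} carries over verbatim and $\Sigma''_0$ is the desired Cauchy slice of $\cC'_0$ extending $\Sigma'_0$.

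The principal obstacle is precisely that final step: the purely order-theoretic part is a routine transcription of Proposition \ref{prelim}, and the finite and the $\cC_0$-internal cases are elementary, but excluding an infinite maximal chain of $\cC'_0$ that runs entirely to the past (or future) of $\Sigma'_0$ does not seem to follow from connectedness alone and must invoke the causal structure of $\cM$ — its global hyperbolicity and the controlled way in which the causets $\cC_k$ are embedded in it.
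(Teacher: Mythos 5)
Your reduction of the lemma to its order-theoretic skeleton is fine as far as it goes, but the step you yourself flag as the principal obstacle --- excluding an infinite maximal chain $C\subset\cC'_0$ lying entirely strictly to the past (or future) of $\Sigma'_0$ --- is precisely where the proof has to live, and your treatment of it is a reduction to an unproved, and in fact unavailable, claim. You propose to confine the inextendible causal curve $\gamma$ carrying $C$ inside $J^+(\gamma(0))\cap J^-(K)$ for some compact $K$ and then invoke compactness of such sets in a globally hyperbolic spacetime. But $\Sigma'_0$ is in general an infinite, non-compact antichain (only $\Sigma_0=\cC_0\cap\Sigma$ is finite; the Cauchy slice produced by Proposition \ref{prelim} adds countably many further points spread over a non-compact region), so $C\subset J^-(\Sigma'_0)$ yields no compact $K$ with $C\subset J^-(K)$. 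Worse, the statement you would need is false for a general non-compact acausal set: in Minkowski space the future-inextendible null ray $t=x-c$ ($c>0$) stays forever in the causal past of the acausal hypersurface $t=\sqrt{1+x^2}$ without ever entering its causal future. So some property of $\Sigma'_0$ beyond acausality must be used, and the paper's proof supplies exactly this: since every element of $\cC_0$ ends up comparable to $\Sigma'_0$ in the construction of Proposition \ref{prelim}, the set $J^+(\Sigma'_0)\cup J^-(\Sigma'_0)$ meets every member of the locally finite cover $\cU_0$, and it is this covering property, played off against the connected non-compact sets $J^{\pm}(C)$, that rules out $C\cap J^-(\Sigma'_0)=\varnothing$. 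Your appeal to ``the way $\Sigma'_0$ and $\cC'_0$ are sprinkled, as encoded in (a)--(d)'' points at the right neighbourhood but never extracts the property that actually does the work.

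Two further remarks. First, your architecture differs from the paper's: the paper does not re-run the induction of Proposition \ref{prelim} at all, but argues directly that $\Sigma'_0$ already satisfies $(\ddagger)$ against every maximal chain of $\cC'_0$, so that one may take $\Sigma''_0=\Sigma'_0$; your extra extension step is harmless but buys nothing, since the residual hard case is the same one the paper faces for $\Sigma'_0$ itself. Second, the assertion that ``by local finiteness of $\cC'_0$ only finitely many elements of a chain can be strictly below a given element of $E$'' is false as stated ($\mathbb{Z}$ with its usual order is locally finite and every element has infinitely many predecessors); local finiteness bounds intervals $[x,y]$, not down-sets. The reduction you want is still available --- if cofinally many elements of a maximal chain $C$ lie below a single $e$, then all of $C$ does, whence $C\cap x_\uparrow\subset[x,e]$ is finite and $C$ has a maximum, contradicting maximality exactly as in the argument for $(\dagger)$ in Proposition \ref{prelim} --- but it requires that argument, not the one you gave.
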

        \begin{proof}
        	Observe that since $\Sigma'_0$ is a Cauchy slice then $J(\Sigma'_0) := J^+(\Sigma'_0) \cup J^-(\Sigma_0')$ will meet any open set $U \in \cU_0$ by the properties of $\cU_0$.  

           Assume that $(\ddagger)$ (from section \ref{causet}) does not hold for $\Sigma'_0$ and $C'$, where $C' \subset \cC'_0$ is some maximal chain. Then either $J^-(C') := \bigcup_{x \in C'} J^-(x)$ or $J^+(C') := \bigcup_{x \in C'} J^+(x)$  does not intersect $\Sigma'_0$.  Observe that $C'$ is infinite (otherwise $C'$ would cease to be inextendible). Then $J^\pm(C')$ will consist of connected non compact subsets of $\cM$. Hence either $J^-(\Sigma'_0)$  or $J^+(\Sigma'_0)$ does not meet a connected non-compact subset of $\cM$. Since $\Sigma'_0$ is a Cauchy slice in $\cC'_0$, we obtain a contradiction with the assumptions concerning the cover $\cU_0$.  
    	\end{proof}
        \subsubsection{Obtaining $\cC_k$ and $F_k$ for $k \geq 1$}
        Assume that we have obtained causets $\cC_\ell$ and temporal foliations $F_\ell$ for $\ell=0, 1 , \cdots, k$. 
        By replacing $\cC_\ell$ with $\bigcup_{i \leq \ell} \cC_i$, we may assume that $\cC_\ell \subset \cC_{\ell+1}$ for $\ell=0, \cdots, k-1$.   \\
	    We will Repeat the procedure done in \ref{C_0} with some changes.  \\
	    Let $\{U_{i(k+1)}| i \in I_{k+1} \subseteq \mathbb{N}\}$ be an enumeration of the cover $\cU_{k+1}$.  \\
	    For any $U_{i(k+1)} \in \cU_{k+1}$ choose $N_{i(k+1)}$ 
	    elements (denoted as $x_{i(k+1)\ell}$ with $\ell =1, \cdots, N_{i(k+1)}$) in $U_{i(k+1)}$; this is done subject to the requirement in item (d) above: more precisely, if $p$ is an immediate predecessor to $q$, $p, q \in \cC_k \cap U_{i(k+1)}$, let $\{S_{k\ell}| \ell =1, \dots, N\}, S_{k1} \ll \cdots \ll S_{kN}$ be the set of $F_k$-slices $S$ for which $p \prec z, t \prec q$ for some $z,t \in S$. Then the elements $x_{i(k+1)\ell}$ contain distinct elements $y_2, \cdots, y_N,  z_m, t_m$ with $m=1, \cdots, N$ such that 
	    $ (z_m \| t_m \, \textrm{or} \, z_m = t_m)$ for $m=1, \cdots, N$ and 
	    $$ p =: y_1 \prec \cdots \prec y_{N+1} := q \; \& \; (y_m \prec z_m \, \& \, t_m \prec y_{m+1}) \; \textrm{for} \; m=1, \cdots, N.$$ 
	     Set  
	     $$  \cC_{k+1} := \bigcup_{i \in I_{k+1}} \{x_{i(k+1)\ell}| \, \ell = 1, \cdots, N_{i(k+1)}\} \cup \cC_k.      $$ 

	    Let $\Sigma_{k+1}$ be the set $\cC_{k+1} \cap \Sigma$. The same procedure done in \ref{C_0} is then repeated for $\Sigma_{k+1}$, provided appropriate changes are made. We deduce that $\Sigma_{k+1}$ is finite. \\
	    Define $\Sigma'_{k+1} := \Sigma_{k+1} \cup \Sigma'_k$ (where $\Sigma'_k$ has already been obtained at step $k$). 
	    Let again $I'_{k+1}$ be the set of indices of opens $U_{i(k+1)} \in \cU_{k+1}$ meeting $\Sigma'_{k+1}$, and let $X_{k+1} := (\cC_{k+1} \cap \bigcup_{i \in I'_{k+1}} U_{i(k+1)}) \setminus \Sigma'_{k+1}$. 
	    Let also $X_{k+1}^+ = X_{k+1} \cap (\Sigma'_{k+1} \cup \Sigma)_{\uparrow}$ and $X_{k+1}^- = X_{k+1} \cap (\Sigma'_{k+1} \cup \Sigma)_{\downarrow}$.  
	    Let $\{ \Gamma^-_{i(k+1) }| i \in I^-\}$ (with $I^- \subset \mathbb{Z}$) %, I^- \neq \emptyset$) 
	    be the set of $F_k$-slices contained in $X_{k+1}^-$ augmented with all elements $y_\ell$ from item (d) above; more precisely, $\Gamma^-_{i(k+1)}$ are given by 
	    $$   \Gamma^-_{i(k+1)} = S \cup \bigcup_{\{p,q \in \cC_{k+1}| \, \{p\} \ll' S \ll' \{q\}\}}\{ z,t \in \cC_{k+1} \setminus \cC_k| p \prec' z \, \& \, t \prec' q  \}                                                                   $$
	    where $S$ is an $F_k$-slice contained in $X_{k+1}^-$, $\ll'$ denotes immediate precedence among antichains in $\cC_{k+1}$, and $\prec'$ is the immediate causal precedence in $\cC_{k+1}$. 
	    Apply proposition \ref{aux} to $X_{k+1}^-$ with the slices $\Gamma^-_{i(k+1)}$ playing the role of the $\Sigma_\ell$'s to obtain a partition of $X_{k+1}^-$ into Cauchy slices $\Gamma_{0-\ell}$.   \\
	    A similar procedure is applied to $X_{k+1}^+$ to obtain a partition (of $X_{k+1}^+$) into slices $\Gamma_{0+\ell}$ . 	
	   	\begin{figure}[!h]
	   		\begin{center}
	   			\includegraphics[scale=3.,height=7.cm,width=10.cm]{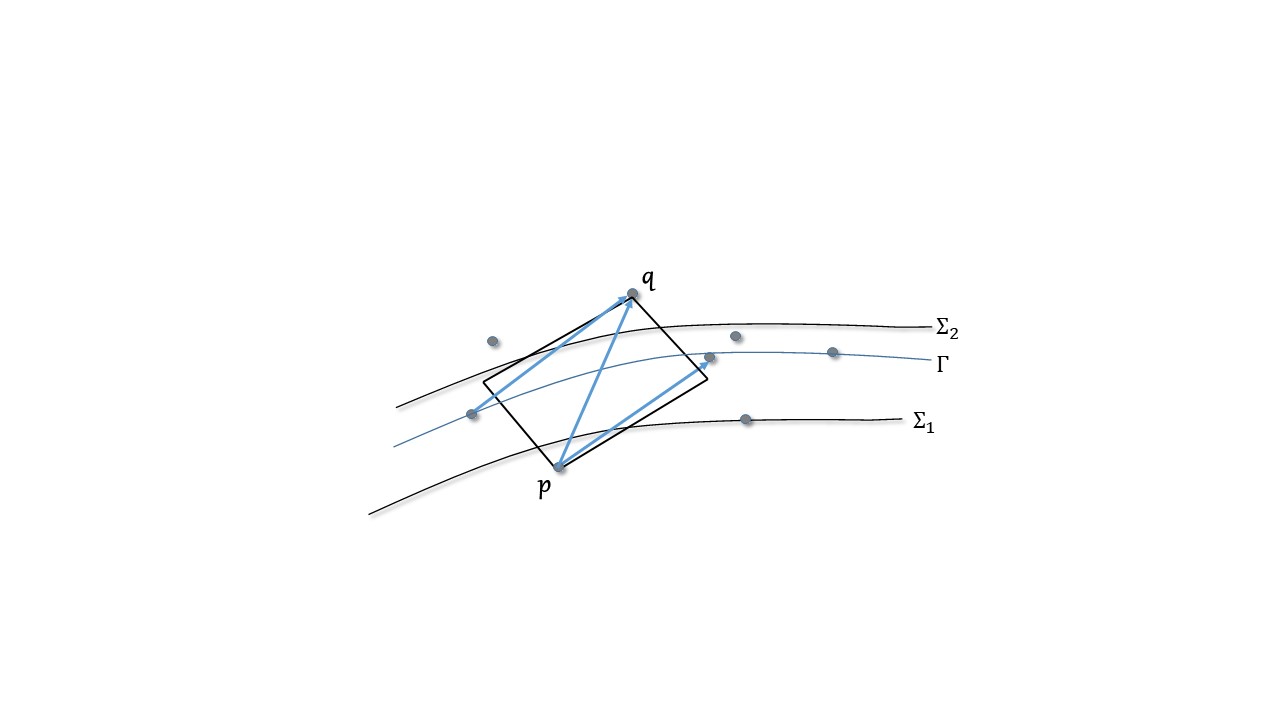}
	   			\caption{Two points $p,q$ of $\cC_k$ and a $F_k$-slice $\Gamma$ satisfying conditions in (c) of \ref{require}. }
	   		\end{center}
	   	\end{figure}    	    
		Now let $\{ \Gamma_{i(k+1) }| i \in I\}$ (with $I \subset \mathbb{Z}, I \neq \emptyset$) be the set of $F_k$-slices augmented with all elements $y_\ell$ from item (d) above; more precisely, $\Gamma_{i(k+1)}$ are given by 
		$$   \Gamma_{i(k+1)} = S \cup \bigcup_{\{p,q \in \cC_{k+1}| \, \{p\} \ll' S \ll' \{q\}\}}\{  z,t \in \cC_{k+1} \setminus \cC_k| p \prec' z \, \& \, t \prec' q  \}                                                                   $$
		where $S$ is an $F_k$-slice, $\ll'$ and $\prec'$ are as above.   \\ 
		 \begin{figure}[!h]
			\begin{center}
				\includegraphics[scale=3.,height=7.cm,width=10.cm]{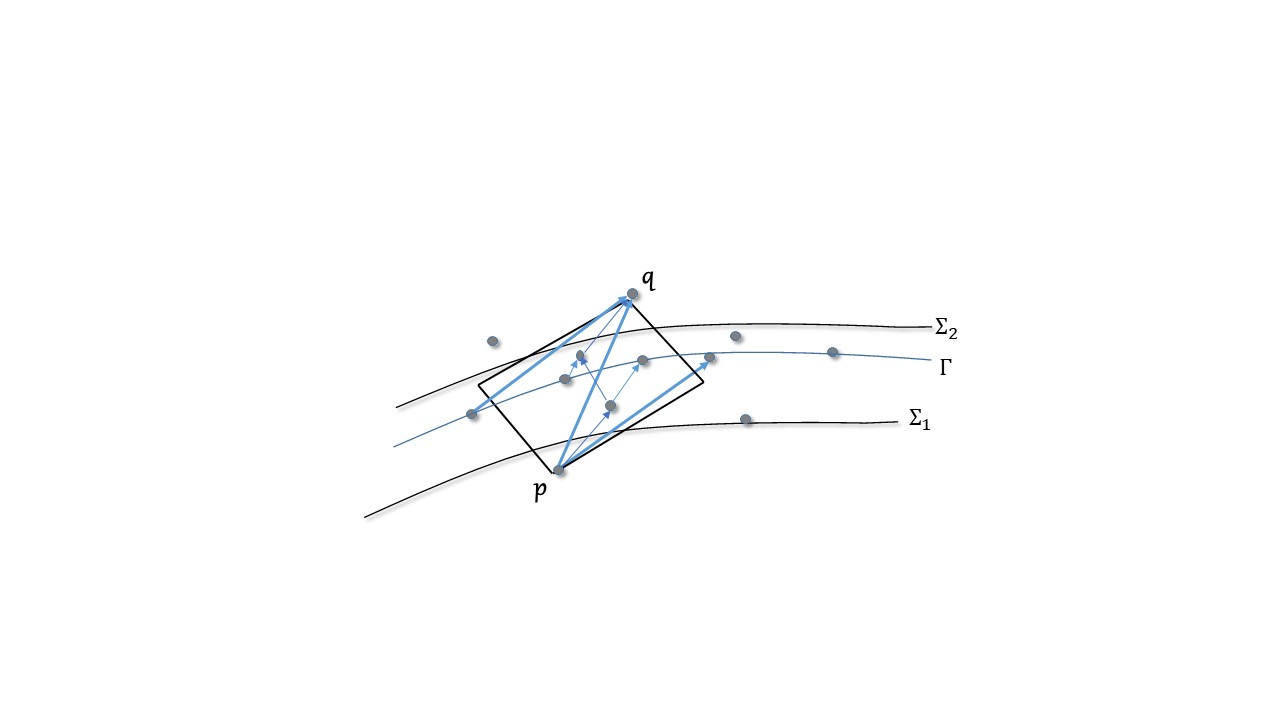}
				\caption{The added points to $\cC_{k+1}$ along the lines of requirements in item (d) in \ref{require}.}
			\end{center}
		\end{figure}
	    Apply proposition \ref{aux} to $\cC_{k+1}$ with the antichains $\{\Gamma_{i(k+1)}, \Gamma_{0-\ell}, \Gamma_{0+\ell} | i \in I\}$ playing the role of $\Sigma_\ell$'s to get a temporal foliation $F_{k+1}$ with $F_k \sqsubseteq F_{k+1}$. Here $\sqsubseteq$ designates the following relation among foliations: $F \sqsubseteq F'$ iff every slice of $F$ is a subset of a slice of $F'$.     \\
	    It can now be seen that items (a, b, c, and d) above are met for all $k \geq 1$.
	     
	    We have:
	    \begin{ass} \label{assertion}
	    	Given any $p, q \in \cC_k, p \precneqq q$ for some $k \in \mathbb{N}$, there exists some $\ell \geq k$ and a $z \in \cC_\ell$ such that $p \precneqq z \precneqq q$. %Furthermore, if $p \in \cC_k$ is maximal (respectively minimal) in $\cC_k$, then there exists some $\ell >k$ and $z \in \cC_\ell$ such that $p \precneqq z$ (respectively $z \precneqq p$).   
	    \end{ass}
        \begin{proof}
        	Let $\cV_0$ be the spacetime volume of the diamond $\cD(p,q)$; by the conditions imposed on the covers $(\cU_\ell)_\ell$,
        	we see that for some sufficiently large $\ell$, $\cD(p,q)$ is covered by a minimum of at least three diamonds $U_0, U_1, U_2 \in \cU_\ell$, with $p \in U_0\setminus U_1, q \in U_2\setminus U_1$; then clearly there is a $z \in U_1 \cap \cC_\ell$ for which $p \precneqq z \precneqq q$. 
       %	A similar argument applies when $p$ is minimal of maximal in $\cC_k$.              
        \end{proof}   
	    Given a slice $S \in F_\ell$, (in particular $S \neq \emptyset$), let $k_0 \leq \ell$ be the smallest $k \in \mathbb{N}$ such that there is $S' \in F_k$ ($S' \neq \emptyset$ as $F_k$ is a foliation) satisfying $S' \subset S$. We set $k(S) := k_0$ thus defined; in particular, for $S \in F_\ell,  S' \in F_{\ell'}, \ell, \ell \in \mathbb{N}$, $k(S) = k(S')$ whenever $S' \subset S$ or $S \subset S'$. \\ 
		Define the set $\cF^\dagger$ of sequences of slices in $\bigcup_\ell F_\ell$:    		
		\begin{eqnarray*}
			 (S_k)_{k \geq k_0} \in \cF^\dagger & \longleftrightarrow  &   \exists N \in \mathbb{N}, k_0 := k(S_N)  \\
			  \textrm{and}  & \forall  k \in \mathbb{N}, &  k \geq k_0 ,  S_k \in F_k \; \& \; S_{k+1} \supseteq S_k.
		\end{eqnarray*}
			 	Let $\cC := \bigcup_k \cC_k$ be the union of the causets $\cC_k$. By (\ref{assertion}) the space $\cC$ is a \textit{causal space}, and not a causal set, since it does not satisfy the requirement of local finiteness.    \\  
		Let $\cF$ be the following partition of $\cC$: The elements of $\cF$ are  sets of the form
		$$ S((S_k)_{k \geq k_0}) = \bigg(\bigcup_{k \geq k_0} S_k\bigg) $$
		with $(S_k)_{k \geq k_0} \in \cF^\dagger$.   \\
		We will show first that $\cF$ is a foliation of $\cC$. For this purpose we will show that the following axioms hold for $\cF$:   
		\begin{enumerate}
			\item The foliation $\cF$ is a cover of $\cC$: let $z \in \cC$ be an arbitrary element, then $z \in \cC_k$ for some $k$ (by definition of $\cC$), then $z \in S_k \in F_k$ (since $F_k$ is a cover of $\cC_k$), whence $z \in S$ with $S = S((S'_\ell)_\ell)$ and $(S'_\ell)_\ell \in \cF^\dagger$ is the sequence satisfying $ S'_k =S_k$.
			\item Any two distinct slices of $\cF$ are disjoint: this follows since otherwise we would have some $z \in S_k \cap S'_\ell$, for $S_k \in F_k, S'_\ell \in F_\ell$ (and $S_k \nsubseteq S'_\ell, S'_\ell \nsubseteq S_k$). Hence (assuming $k > \ell$, say) $z \in S_k \cap S'_k$, is a contradiction with $F_k$ being a foliation. 
			\item For $S, S' \in \cF$, $S \ll S'$ or $S' \ll S$ (with $\ll$ being the order relation on slices induced from $\prec$ in an obvious way ($S \ll S'$ iff $\exists z \in S \exists z' \in S' (z \prec z')$): this can be seen immediately by recalling the definitions of $S, S'$ as unions of slices in $\cC_k$'s, and using the axioms of a foliation.  
		\end{enumerate}	  
	    Observe that $\cF$ equipped with the order $\ll$ is a countable dense total order without endpoints. Then by standard order theory $\cF$ is order isomorphic to $\mathbb{Q}$.   \\
	    The Dedekind completion of $\cF$ with respect to the order $\ll$ is order isomorphic to $\mathbb{R}$. We will denote it by $\widehat{\cF}$.   We have $\cF \subset \widehat{\cF}$. 
	   
	     Let $S \in \widehat{\cF}$. We declare that $z \rightarrow  S$ if $z = \lim_k z_k$ with $z_k \in S_k$ (with $S_k \in \cF$) and $S = \lim_k S_k$ (in the $\ll$-order topology (generated by intervals) on $\cF$) for all but finitely many $k$'s.    
	     \subsubsection{'Completing' the foliation} 
	     Let $$\cF':= \{ \{z \in \cM | z \rightarrow S \} |{S \in \widehat{\cF}}\}. $$ 
	    Observe that each $S \in \cF$ is a subset of some $S' \in \cF'$, since $z \in S$ implies $z \rightarrow S$ by considering the constant sequence $z_k =z$.  However, we do not have $\cF \subset \cF'$ in general, since a slice in $\cF$ needs to be somehow completed in order to get a slice of $\cF'$. \\ 
	     Consider the following map 
	    $$ \iota:  \widehat{\cF}  \to \cF',  S \mapsto S' :=\{z \in \cM| z \rightarrow S\}, $$
	     then $\iota$ is a bijection. Surjectivity follows by construction. Let us show that $\iota$ is injective:
	     assume otherwise, so we have $\iota(S) = \iota(S')$ for some $S, S' \in \widehat{\cF}, S \neq S'$; in particular $S = \inf_k S_k \ll S' = \sup_\ell S'_\ell$ for $(S_k)_k$ (respectively $(S'_\ell)_\ell$) a sequence of $\cF$-slices.     
	     
	     For sufficiently large $k, \ell$ there is a finite spacetime gap (i.e. the causal diamond between any two causally related points of $S_k, S_\ell$ has a spacetime volume larger than a minimal strictly positive quantity ) and hence we cannot have $(\forall z \in \iota(S)) z \rightarrow S \, \& \, z \rightarrow S'$.   \\
	    We have:
	    \begin{prop}  \label{temp}
	    	The set $\cF'$ covers $\cM$. Furthermore, if $S \ll S'$ for distinct $S, S' \in \widehat{\cF}$ then for all $z \in \iota(S)$, $\nexists z' \in \iota(S') (z' \precneqq z)$. 
	    \end{prop}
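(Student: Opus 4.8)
The plan is to establish the two assertions in turn, each time exploiting the density of $\cF$ in its Dedekind completion $\widehat{\cF} \cong \mathbb{R}$ together with the uniform positive lower bounds on the spacetime volume of diamonds between sufficiently separated $F_k$-slices — the same quantitative input that was just used to prove injectivity of $\iota$.

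\emph{That $\cF'$ covers $\cM$.} Fix $z \in \cM$. Using properties 4 and 5 of the covers $(\cU_k)_k$ I would pick a nested sequence $\cD_k \in \cU_k$, $\overline{\cD}_{k+1} \subseteq \overline{\cD}_k$, with $\bigcap_k \overline{\cD}_k = \{z\}$; as the $\overline{\cD}_k$ are nested compacta with singleton intersection, their diameters tend to $0$ in any metric inducing the manifold topology. By construction each $\cD_k$ meets $\cC_k \subseteq \cC$; choose $z_k \in \cD_k \cap \cC_k$, so that $z_k, z \in \overline{\cD}_k$ gives $z_k \to z$. Let $S_k \in \cF$ be the slice through $z_k$. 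I would then show that $(S_k)_k$ converges in the order topology of $\widehat{\cF}$: it is bounded (all $z_k$ lie in the compact $\overline{\cD}_0$, and by global hyperbolicity this bounds the $\ll$-value of any slice meeting it), so if it failed to converge there would be a gap $T \ll T'$ in $\cF$ with $S_k \ll T$ for infinitely many $k$ and $T' \ll S_k$ for infinitely many $k$; choosing $m$ so large that $T$ and $T'$ already contain $F_m$-slices $\hat T \ll \hat T'$ whose mutual diamonds have volume bounded below (possible by \ref{estimate}), the corresponding $z_k$ would be driven apart in $\cM$ by a fixed positive amount, contradicting $z_k \to z$. Hence $S_k \to S$ for some $S \in \widehat{\cF}$, so $z \rightarrow S$ and $z \in \iota(S) \in \cF'$.

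\emph{Causal consistency.} Suppose, towards a contradiction, that $S \ll S'$ with $S \neq S'$ in $\widehat{\cF}$, $z \in \iota(S)$, $z' \in \iota(S')$ and $z' \precneqq z$. I would write $z = \lim_k z_k$ and $z' = \lim_k z'_k$ with $z_k \in S_k \to S$, $z'_k \in S'_k \to S'$, and use density of $\cF$ in $\widehat{\cF}$ to interpose $T_1, T_2 \in \cF$ with $S \ll T_1 \ll T_2 \ll S'$; pushing to a large level $m$ as in the injectivity argument for $\iota$, I obtain $F_m$-slices $\hat T_1 \subseteq T_1$, $\hat T_2 \subseteq T_2$ with $\hat T_1 \ll \hat T_2$ and $V(\cD(a,b)) \geq \epsilon_0 > 0$ for all $a \in \hat T_1, b \in \hat T_2$ with $a \prec b$. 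For all large $k$ one has $S_k \ll T_1$ and $T_2 \ll S'_k$; since each $F_k$ is a temporal foliation (so $(\ddagger)$ holds for it) I would argue that $z_k$ lies weakly in the causal past of $\hat T_1$ and $z'_k$ weakly in the causal future of $\hat T_2$, whence in the limit $z$ causally precedes $z'$ with a strictly positive timelike separation (bounded below in terms of $\epsilon_0$). Together with $z' \precneqq z$ this gives a closed causal curve through the distinct points $z, z'$, which is impossible in the globally hyperbolic $\cM$. This proves the proposition.

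\emph{Main obstacle.} The hard part, in both steps, will be to upgrade the purely order-theoretic relation $S_k \ll T$ in $\cF$ to genuine causal and metric information about the representatives $z_k \in \cC_k$ — in particular to exclude the possibility that $z_k$ simply remains incomparable to the interposed slice while $(S_k)$ drifts off, or merely clusters, in $\widehat{\cF}$. This is precisely where one must use that every $F_k$ is a temporal foliation (condition $(\ddagger)$) together with estimate \ref{estimate}: the uniform lower bound on the volume of diamonds between two well-separated $F_m$-slices both forces the Cauchy (hence convergent) behaviour of $(S_k)$ in $\widehat{\cF}$ and supplies the definite causal separation that closes the second argument; once this quantitative input is in place, the remaining steps are a routine limit-and-compactness argument.
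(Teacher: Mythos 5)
Your first half (covering) follows the same strategy as the paper: pick points of $\cC$ converging to $z$, pass to their $\cF$-slices, and take a limit in $\widehat{\cF}$. The paper is itself terse about why $\lim_\ell T_\ell$ exists, and your attempt to justify convergence is in the right spirit. But the mechanism you invoke to rule out oscillation is exactly where the proposal breaks down, and the same defect sinks your second half. You repeatedly need the implication: ``$S_k \ll T$ in $\cF$ forces the particular representative $z_k \in S_k$ to lie causally below (some point of) $T$, hence to be metrically separated from any point in a slice above $T'$.'' That implication is not available. The relation $\ll$ between antichains only asserts that \emph{some} pair of elements of the two antichains is causally related (or incomparability together with $\Sigma_1\cup\Sigma_2$ failing to be an antichain); an individual $z_k$ may be spacelike to every element of $T$, so no lower bound on the volume of diamonds between $\hat T$ and $\hat T'$ controls where $z_k$ sits in $\cM$. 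You name this as the ``main obstacle'' but do not close it, and it is the crux of both of your arguments. A secondary point: \ref{estimate} is an \emph{upper} bound $\delta(\cD)\lessapprox 2^{-k}$ on the diamonds of the covers, not the lower bound on separations you need (that role is played by \ref{estimate-final}).

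For the second assertion the paper argues in the opposite, and easy, direction, which avoids the gap entirely: assuming $z'\precneqq z$ with $z\in\iota(S)$, $z'\in\iota(S')$, it uses density of $\cC$ in $\cD(z',z)$ to produce sequences $z_k\to z$, $z'_\ell\to z'$ \emph{inside the diamond} $\cD(z',z)$, so that $z'_\ell\prec z_k$ for large $k,\ell$; the definition of $\ll$ then gives $S'_\ell\ll S_k$ immediately (a causal relation between points of two slices \emph{does} order the slices), and passing to the limit yields $S'\ll S$, contradicting the antisymmetry of $\ll$ against the hypothesis $S\ll S'$. No interposed slices, volume estimates, or closed causal curves are needed. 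You should restructure your second half along these lines: deduce the order on slices from the causal relation of points (immediate from the definitions), rather than trying to deduce causal and metric information about individual points from the order on slices (which requires an argument you have not supplied).
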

	    \begin{proof}
          Let $p \in \cM$ be arbitrary; there exists a sequence $(p_k)_k, z_k \in \cC$ converging to $p$; 
          hence the set $X :=\{p_k| k \in \mathbb{N}\}$ is bounded and contained in a compact subset of $\cM$. It follows that $X$ meets any causet $\cC_\ell$ in a finite set. We can then extract a subsequence 
          $(p'_\ell)_\ell \equiv (p_{k_\ell})_\ell$ which converges to $p$ with $p'_\ell \in T_\ell \in F_\ell$. 
          Hence $p \rightarrow T,  T := \lim_\ell T_\ell$ and $p \in \iota(T)$.  \\ 
           To show the second assertion assume that for some $z \in \iota(S)$ there exists a $z' \in \iota(S')$ such that $z' \precneqq z$. \\
           Observe that $\cD(z',z) \, \cap \, \cC \neq \emptyset$ has no isolated points; in particular $z = \lim_k z_k = \sup_k \{z_k\}$, $z' = \lim_\ell z'_\ell = \inf_\ell\{z'_\ell\}$ (as $\cC$ is dense in $\cM$) with $z_k, z'_\ell \in \cC \cap \cD(z',z)$ for all $k, \ell$.  \\
           Then for large enough $k, \ell$ $z'_\ell \prec z_k$. Let, for all $k, \ell$, $S_k, S'_\ell \in \cF$ such that $z_k \in S_k, z'_\ell \in S_\ell$. Necessarily $S'_\ell \ll S_k$ (by the properties of $\cF$), hence $S' \ll S$, contradicting the antisymmetry of $\ll$.
        \end{proof}
                  Equip $\cF'$ with the induced order, also denoted by $\ll$ ($\iota(S_1) \ll \iota(S_2)$ iff $S_1 \ll S_2$).    
               \subsubsection{The slices of $\cF'$ are Cauchy hypersurfaces}
                 \begin{prop}  \label{Cauchy}
                 	Each slice of $\cF'$ is crossed only once by any inextendible timelike curve.  
                 \end{prop}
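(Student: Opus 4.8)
I would prove the statement in two halves: \emph{at most once} (each slice of $\cF'$ is achronal) and \emph{at least once}. The first half is the substantive computation and I would do it directly. Suppose a future-directed inextendible timelike curve $\gamma$ met $\iota(S)$ (for some $S\in\widehat{\cF}$) at $z_1=\gamma(t_1)$ and $z_2=\gamma(t_2)$ with $t_1<t_2$; then $\gamma|_{[t_1,t_2]}$ is a future timelike curve from $z_1$ to $z_2$, so $z_1<z_2$. By definition of $\rightarrow$, write $z_1=\lim_k z_k$ with $z_k\in S_k\in\cF$ and $S_k\rightarrow S$ in $\widehat{\cF}$, and $z_2=\lim_k w_k$ with $w_k\in S'_k\in\cF$ and $S'_k\rightarrow S$. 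Since $\cC$ is dense in $\cM$ and $I^+(z_1)\cap I^-(z_2)$ is a nonempty open set, choose $c\in\cC$ with $z_1<c<z_2$; since $I^+(c)\cap I^-(z_2)$ is then also nonempty and open, choose $d\in\cC$ with $c<d<z_2$. Let $A,B\in\cF$ be the slices through $c$ and $d$; because $\cF$ is totally ordered by $\ll$ and its slices are antichains in $\cC$, the relation $c\precneqq d$ forces $A\ll B$ (in particular $A\neq B$).

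Now I use openness of chronological futures and pasts. As $z_1\in I^-(c)$ and $I^-(c)$ is open, $z_k<c$ for all large $k$, so $z_k\precneqq c$ puts $z_k$ and $c$ in distinct $\cF$-slices, whence $S_k\ll A$ for all large $k$; symmetrically, $z_2\in I^+(d)$ open and $w_k\rightarrow z_2$ give $d<w_k$, hence $B\ll S'_k$, for all large $k$. Passing to the limit in $\widehat{\cF}$ — which is order-isomorphic to $\mathbb{R}$, so a strict inequality along a convergent sequence degenerates at worst to equality — we get $S\preceq A$ from the first family and $B\preceq S$ from the second. Combined with $A\ll B$ this yields $S\preceq A\ll B\preceq S$, i.e. $S\ll S$, contradicting that $\ll$ is a strict order on $\widehat{\cF}$. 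Hence every slice of $\cF'$ is achronal and no inextendible timelike curve meets it more than once; note that this half uses only the construction of $\widehat{\cF}$ and density of $\cC$, not that $\cF'$ is a partition.

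For ``at least once'' I would first observe that, by the achronality just proved and the second part of Proposition \ref{temp}, the assignment $\sigma$ taking $t$ to the $\cF'$-slice through $\gamma(t)$ is a strictly $\ll$-increasing map from the parameter interval of $\gamma$ into $\widehat{\cF}\cong\mathbb{R}$; so it suffices to show its image is cofinal and coinitial in $\widehat{\cF}$ and omits no value, in particular not $S$. This is exactly where the discrete temporal condition $(\ddagger)$ must be cashed in: one shadows a past- and future-inextendible timelike $\gamma$ by maximal chains $C_k\subset\cC_k$ lying inside the diamonds of $\cU_k$ (using requirement (d) of \ref{require} and Assertion \ref{assertion} to make each $C_k$ genuinely inextendible in $\cC_k$ and to control its length), applies $(\ddagger)$ to the $F_k$-slice $\Sigma_k$ approximating $S$ to obtain $x_k,y_k\in C_k$ and $z_k,t_k\in\Sigma_k$ with $x_k\prec z_k$ and $t_k\prec y_k$, and passes to the limit: global hyperbolicity keeps the relevant points in a fixed compact set, and the volume estimate (\ref{estimate-final}) — the same ``finite spacetime gap'' that makes $\iota$ injective — prevents the gap straddling $\iota(S)$ from collapsing. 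In the limit one gets points of $\gamma$ lying causally below and causally above points of $\iota(S)$, and connectedness of $\gamma$ then forces a genuine crossing.

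The second direction is where I expect the real difficulty, for three reasons. First, producing inextendible chains in $\cC_k$ that faithfully track a prescribed timelike curve is a nontrivial consequence of the sprinkling constraints (a)--(d) and of Assertion \ref{assertion}, and must be done uniformly in $k$. Second, one needs $\cF'$ to be an honest partition of $\cM$ — so that $\sigma$ is even well defined — whereas the excerpt establishes only that $\iota:\widehat{\cF}\to\cF'$ is injective; closing this gap (distinct slices are \emph{disjoint}, not merely unequal) is itself a small ``finite spacetime gap'' argument that should be spelled out. Third, the limiting step must simultaneously control the compactness of the straddling points of $\iota(S)$ and the non-degeneracy of the spacetime gap, which is precisely what turns the discrete statement $(\ddagger)$ into the continuum Cauchy property. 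The achronality half, by contrast, is self-contained given everything stated above.
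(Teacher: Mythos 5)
Your decomposition (at most once / at least once) matches the paper's split into its statements (ii) and (i), but the weight you place on each half is inverted relative to the paper, and your treatment of the achronality half is genuinely different. The paper proves non-emptiness of $C\cap S$ in detail -- by modifying the causets so that a chain $C_k\subset\cC'_k$ tracks $C$, invoking $(\ddagger)$ for the temporal foliation $F_k$ to produce $x_k,y_k\in C_k$ and $z_k,t_k\in S_k$ straddling the slice, and letting the timelike separation of $x_k,y_k$ collapse to a single limit point $p\in C\cap S$ (with a separate case for $S\in\cF'\setminus\iota(\cF)$ handled by intersecting $C$ with approximating $\cF$-slices and taking the limit of the intersection points) -- and then disposes of the ``at most once'' half in one sentence, claiming it ``proceeds through the same steps.'' You do the opposite: your achronality argument is a direct order-theoretic squeeze ($S\preceq A\ll B\preceq S$ in $\widehat{\cF}$, using density of $\cC$ to interpose two causally related points $c\precneqq d$ between $z_1$ and $z_2$ and openness of $I^{\pm}$ to trap the approximating slices), which is complete, avoids the chain-tracking machinery entirely for that half, and is essentially the same antisymmetry-of-$\ll$ argument the paper only deploys later when showing each $\cF'$-slice is an antichain in the proposition that $\cF'$ is a foliation; folding it into this proposition is arguably an improvement. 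For the ``at least once'' half you reproduce the paper's strategy in sketch form, correctly identifying the needed ingredients (requirement (d) of \ref{require}, Assertion \ref{assertion}, $(\ddagger)$, compactness from global hyperbolicity, the gap estimate (\ref{estimate-final})); the difficulties you flag there -- uniform tracking of $\gamma$ by inextendible chains, disjointness of distinct $\cF'$-slices, and the simultaneous control of compactness and non-degeneracy in the limit -- are real, but they are present to the same degree in the paper's own proof, which also leaves the passage ``for some $x,y\in C$, $z,t\in S$, $x\prec z\;\&\;t\prec y$'' and the uniqueness of the limit point $p$ at the level of assertion. The one element of the paper's proof you do not make explicit is the case split between $S\in\iota(\cF)$ and the Dedekind-cut slices $S\in\cF'\setminus\iota(\cF)$; your phrase ``the $F_k$-slice approximating $S$'' papers over the fact that for a cut slice one must first prove the result for nearby $\cF$-slices and then pass to the limit of the resulting intersection points, as the paper does in its case (2).
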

                 \begin{proof}
                 	The statement can be split into two parts:    
                 	\begin{enumerate} [i.]
                 	 \item The intersection of each inextendible timelike curve with any $\cF'$-slice is non-empty, and 
                 	 \item Any timelike curve cannot intersect an $\cF'$-slice in more than one point.
                    \end{enumerate}
                    Let us show statement (i) first:
                 	In order to show the required result, we will assume the contrary.        \\
                 	$(\star \star)$ Assume that there exist an inextendible timelike curve $C$ and a slice $S \in \cF'$, such that $C \cap S = \emptyset$.  
                 	 
                 	We may assume (without loss of generality) that $C$ can be approximated by a maximal chain in $\cC_k$ (at each stage $k$). \\
                 	In order to satisfy this assumption, we will modify slightly the family of causets $(\cC_k)_k$ constructed above. More precisely,
                 	consider the covering $\cU_k$ and let
                 	$(\cC'_k)_k$ be a family of causets \textit{dependent} on $C$ that, in addition to items (a) through (d), satisfy the following condition   \\
                 	$(*)$  The set 
                 	$  C_k := \{ p| \; p \in U \cap \cC'_k,\; \textrm{for} \; U \in \cU_k \; \textrm{and} \; U \cap C \neq \varnothing \},   $
                 	is totally ordered.  
                 	
                 	Having constructed the sequence of causets $\cC'_k$, the same considerations and related constructions for the causets $\cC_k$ 
                 	are repeated in the present context. For simplicity, we will continue to denote $\cF$ and $\cF'$ the corresponding foliations of the causal space $\cC' = \bigcup_k \cC_k'$ and $\cM$ respectively.  \\

                 	There are two cases: 
                 	\begin{enumerate}
                 		\item $S \in \iota(\cF)$, i.e. $S = \iota(S((S_k)_k))$ for some $(S_k)_k \in \cF^{\dagger}$, 
                 		\item $S \in \cF' \setminus \cF$.
                 	\end{enumerate}
                 	In case (1): 
                 	The causets $\cC'_k$ can be chosen in such a way that $C_k$ is a total order (this is possible starting from any given $p_{k0} \in U \cap \cC_k$ for $U \in \cU_k$ $U \cap C \neq \varnothing$, by choosing inductively subsequent points in the causal pasts or future of previous points in $C_k$). The foliations $F_k$ can in turn be chosen so that $S_k \in F_k$ (which is possible by proposition \ref{aux}). \\
                 	Since each $F_k$ is a temporal foliation of $\cC_k$, we have:
                 	$$ \exists x, y \in C_k \, \exists z, t \in S_k \; \textrm{such that} \; x \prec z \; \& \; t \prec y $$  
                 	We let $x_k$, (respectively $y_k$) be a maximal (respectively minimal) element in $C_k$ satisfying the above condition. \\	 
                 	It follows from the above considerations that for some $x, y \in C, z, t \in S$, $x \prec z \, \&  \, t \prec y$. \\ 
                 	We obtain: by clause (d) above (under \ref{require}), the maximality of $C_k$ in $\cC_k$ and the choice of $x_k, y_k$, it follows that $x_k, y_k \to p$ for some unique $p$ (as ${k \to \infty}$) and hence $p \in C \cap S$ as required. 
                 	 More precisely,
                 	at each stage k, the sets $C_k \cap \cD(x_{k_0}, y_{k_0})$ (for some sufficiently large integer $k_0$) become more and more dense; hence the timelike distance between $x_k,y_k$ tends to zero as $k$ grows indefinitely. 
                 	
                 	In case (2): $S$ is the limit of slices $S'_\ell \in \cF, \ell \in \mathbb{N}$ (in the topology associated to the order $\ll$). By case (1), the curve $C$ intersects each $S'_\ell$ once: let $z'_\ell$ be the intersection point $ \{z'_\ell\} = C \cap S'_\ell$. Then $z := \lim_\ell z'_\ell$ exists and belongs to $C \cap S$ is the unique intersection point of $C$ with $S$. In particular one has $C \cap S \neq \emptyset$.  
                 	
                 	It follows from the above reasoning that the intersection of $C$ and $S$ is nonempty, contrary to the hypothesis; the obtained contradiction then proves the desired claim.   
                 	
                 	The proof of statement (ii) proceeds exactly through the same steps, provided that one replaces $C \cap S = \emptyset$ in $(\star \star)$ by $\#(C \cap S) >1$ and then deriving a contradiction.  
                 \end{proof}
	  	We have the following: 
		\begin{prop}
			The family $\cF'$ is a foliation of $\cM$.
		\end{prop}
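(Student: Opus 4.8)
The plan is to establish, in order, the three properties that make $\cF'$ a foliation of $\cM$ by $C^{0}$-acausal hypersurfaces: that $\cF'$ is a cover of $\cM$, that its distinct members are disjoint, and that each member is a closed acausal topological hypersurface which, together with the order $\ll$, produces a splitting $\cM\cong L\times\mathbb{R}$. The first property is exactly Proposition \ref{temp}, and the fact that $\Sigma$ lies in a single slice is immediate from the construction: $\Sigma\cap\cC$ is dense in $\Sigma$ and lies, stage by stage, inside a nested sequence of $F_{k}$-slices, so the associated element of $\cF^{\dagger}$ gives an $\cF'$-slice containing all of $\Sigma$. The substantive points are thus disjointness and the hypersurface/product structure, and I would obtain both by routing everything through Proposition \ref{Cauchy} (every inextendible timelike curve meets every $\cF'$-slice exactly once) together with the order isomorphisms $\cF'\cong\widehat{\cF}\cong\mathbb{R}$.

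First I would fix a smooth timelike vector field $V$ on $\cM$ (it exists by time-orientability and a partition of unity) and rescale it so that its integral curves are complete; each such curve $\gamma$ is then an inextendible timelike curve, so by Proposition \ref{Cauchy} it meets $\iota(S)$ in a single point $\gamma(h_{\gamma}(S))$ for every $S\in\widehat{\cF}$. The second clause of Proposition \ref{temp} forces $h_{\gamma}\colon\widehat{\cF}\to\mathbb{R}$ to be non-decreasing: were $S\ll S'$ but $h_{\gamma}(S)>h_{\gamma}(S')$, the point $\gamma(h_{\gamma}(S'))\in\iota(S')$ would strictly precede $\gamma(h_{\gamma}(S))\in\iota(S)$, which Proposition \ref{temp} forbids. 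Global hyperbolicity (no partially imprisoned inextendible causal curve, and every inextendible timelike curve eventually leaving every compact set in both directions) then upgrades $h_{\gamma}$ to a continuous surjection onto $\mathbb{R}$.

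The crux is to sharpen ``non-decreasing'' to ``strictly increasing'', which is equivalent to disjointness of the slices. Suppose $z\in\iota(S)\cap\iota(S')$ with $S\ll S'$; then the integral curve $\gamma$ through $z$ has $h_{\gamma}(S)=h_{\gamma}(S')$, so by monotonicity $h_{\gamma}$ is constant on $[S,S']\subset\widehat{\cF}$, whence $z\in\iota(T)$ for every $\cF$-slice $T$ with $S\ll T\ll S'$. Choosing $\cF$-slices $T''\ll T$ in this interval and invoking the quantitative ``finite spacetime gap'' already used to prove $\iota$ injective (any two causally related points, one in a $\cC$-approximant of $T''$ and one in a $\cC$-approximant of $T$, bound a causal diamond of volume at least a fixed $\varepsilon_{0}>0$) contradicts the fact that $z$ would be simultaneously approximable, inside arbitrarily small causal diamonds about $z$, by points of the approximants of $T''$ and of $T$. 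Hence $\iota(S)\cap\iota(S')=\emptyset$, so $\cF'$ partitions $\cM$ and every $h_{\gamma}$ is a strictly increasing continuous bijection of $\mathbb{R}$.

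With strict monotonicity in hand the foliation structure follows in the standard way: define $\tau\colon\cM\to\mathbb{R}$ by sending $z$ to the real number attached, via $\cF'\cong\widehat{\cF}\cong\mathbb{R}$, to the unique slice through $z$; then $\tau$ is constant exactly on the slices, strictly increasing along future-directed timelike curves, and continuous (its restriction to each integral curve of $V$ is a homeomorphism of $\mathbb{R}$, and the flow of $V$ supplies foliated charts), hence a Cauchy time function, and $(\tau,\text{flow of }V)$ realizes a homeomorphism $\cM\cong L\times\mathbb{R}$ carrying $\cF'$ onto $\{L\times\{t\}\}_{t\in\mathbb{R}}$ for any slice $L$. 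In particular each slice is a closed topological hypersurface, and it is acausal (a Cauchy hypersurface in a globally hyperbolic spacetime is acausal); together with the membership of $\Sigma$ noted above, this also completes the proof of Theorem \ref{main-1}. The one real obstacle is the strict-monotonicity step: Propositions \ref{temp} and \ref{Cauchy} by themselves give only non-strict monotonicity, and excluding ``thick'' intersections of slices genuinely requires the volume-based gap estimate built into the causets $\cC_{k}$.
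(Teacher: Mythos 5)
Your treatment of the first two properties tracks the paper closely: covering is Proposition \ref{temp}, and your disjointness argument (a common point $z$ of $\iota(S)$ and $\iota(S')$ would force $z$ into every intermediate slice by Proposition \ref{Cauchy} together with the monotonicity from Proposition \ref{temp}, which is then killed by the volume/timelike-separation gap of equation \ref{estimate-final} built into the causets) is exactly the paper's mechanism, merely repackaged through the functions $h_{\gamma}$ along integral curves of a timelike field. You also correctly isolate the strict-monotonicity step as the one place where the quantitative causet estimate is indispensable. The additional material on the product structure $\cM\cong L\times\mathbb{R}$ and the time function is not needed for this proposition (the paper defers it to Theorem \ref{time}).

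There is, however, a genuine gap in your last step. The paper's definition of a foliation requires each slice to be an antichain for the \emph{causal} order $\prec$ (i.e.\ acausal), and the paper devotes a separate argument to this: if $z\precneqq z'$ with both in $S$, one approximates by causet points $z'_{\ell}\prec z_{k}$ in $\cD(z',z)\cap\cC$, deduces $S'_{\ell}\ll S_{k}$ for the corresponding $\cF$-slices, inserts a slice $S_m$ strictly between them, and derives $S\ll\iota(S_m)\ll S$, contradicting antisymmetry of $\ll$. You replace this with the assertion that ``a Cauchy hypersurface in a globally hyperbolic spacetime is acausal,'' which is false as a general principle: a set met exactly once by every inextendible timelike curve need only be achronal, not acausal (the past light cone of a point in Minkowski space is a standard counterexample, containing null-related pairs). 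So your argument as written only yields achronality of the slices, and you would need to supply something like the paper's order-theoretic argument through the Dedekind completion (or another use of the gap estimate) to rule out null-related pairs within a single slice.
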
  
		\begin{proof}
			We have already shown that ( \ref{temp}) $\cF'$ is a covering of $\cM$. 			
		
			Any two distinct elements of $\cF'$ are disjoint.  To see this, assume otherwise, and let $z \in S \cap S'$, $S,S' \in \cF', S \ll S', S \neq S'$. \\
			Let $C$ be an inextendible timelike curve, passing through $z$. The existence of such a curve follows for instance by considering a small causal diamond containing $z$ and its image by a causal homeomorphism into $\mathbb{M}^n$ ($n$-dimensional Minkowski spacetime). The curve $C$ crosses $S, S'$ at $z$. Since it intersects all slices $S''$ which satisfy $S \ll S''\ll S'$ (by \ref{Cauchy}) then $z \in S''$ (otherwise there exists some $z'' \in S'' \cap C$ for which $z \precneqq z''$ or $z'' \precneqq z$ contradicting $S \ll S'' \ll S'$ by \ref{temp}).  It follows that for infinitely many slices $\iota(S_\ell), S_\ell \in \cF$ satisfying $S \ll \iota(S_\ell) \ll S'$ one has $z \in \iota(S_\ell)$. \\
			Let us see how this contradicts the construction of $\cC_k$ and $F_k$ for $k \in \mathbb{N}$.  			
			
			As in the proof of \ref{Cauchy} above, we approximate $C$ by sets $C_k$ in $\cC_k$, and we may assume (without loss of generality) that $C_k$ is a chain for all $k$. Let $p, q$ be close points on $C$, $p \precneqq q$, and let $S_1, S_2$ be two $\cF$-slices such that $p \prec t \, \& \, u \prec q$ for some $t \in S_1, u \in S_2$. It can be seen that when $k$ increases, $C_k$ approaches $C$ and the elements $p_k, q_k$ approaching $p, q$ respectively belong to $C_k$.  \\
			 Using the assumption on the timelike separation in $\cC_k$ (\ref{estimate-final})
			 we have (with $N$ being the number of $F_k$ slices intersecting $\cD(p,q)$ and lying between $S_1$ and $S_2$):
			 $$ \tau(p_1, p_2) \geqq  \frac{N}{m_n \rho^{1/n}D_n^{1/n}} $$
			 for $p_1 \in C_k \cap S_1, p_2 \in C_k \cap S_2$; by the assumption $z \in S_1 \cap S_2$, it follows that $\tau(p_1, p_2) \to 0$ for $p_1, p_2 \to z$ as $k \to \infty$. However, the quantity $\frac{N}{m_n \rho^{1/n}D_n^{1/n}} \nrightarrow 0$ (as $k \to \infty$) since otherwise $S_1 =S_2$, thus obtaining a contradiction.
			 In particular this shows that the intersection of $C$ with the union of slices $S''$, $S \ll S'' \ll S'$ cannot be a singleton.
					
			Let now $S, S' \in \cF'$ be two distinct slices; as was already shown in Proposition \ref{temp}, if $S \ll S'$ then there cannot exist an element $z' \in S'$ which precedes an element $z \in S$.  
			
			It remains to show that every $S \in \cF'$ is an antichain (with  respect to the order $\prec$). 
			Let $z, z' \in S$, such that $z \neq z'$.  Then we show that we cannot have $z \prec z'$ or $z' \prec z$.
			If this were the case, say $z \prec z'$, then using $z = \lim_{\ell \to \infty} z_\ell$ and $z'= \lim_{k \to \infty} z'_k$ (with $z_k, z'_\ell \in \cC \cap \cD(z',z)$ for all $k, \ell$) we obtain $z'_\ell \prec z_k$ for large enough $k, \ell$. Similarly to the above, we obtain $S'_\ell \ll S_k$ for large enough $k, \ell$ (say $k, \ell >N$),  where $z'_\ell \in S'_\ell, z_k \in S_k$. Now let $S_m \in \cF, S'_\ell \ll S_m \ll S_k$, $S_m$ distinct than $S_k, S'_\ell$,  for all $k, \ell >N$. By the properties of Dedekind completion we get $S \ll \iota(S_m) \ll S$,  obtaining a contradiction with the antisymmetry of $\ll$. 
		\end{proof}		
	\begin{figure}[!h]
		\begin{center}
			\includegraphics[scale=3.,height=7.cm,width=10.cm]{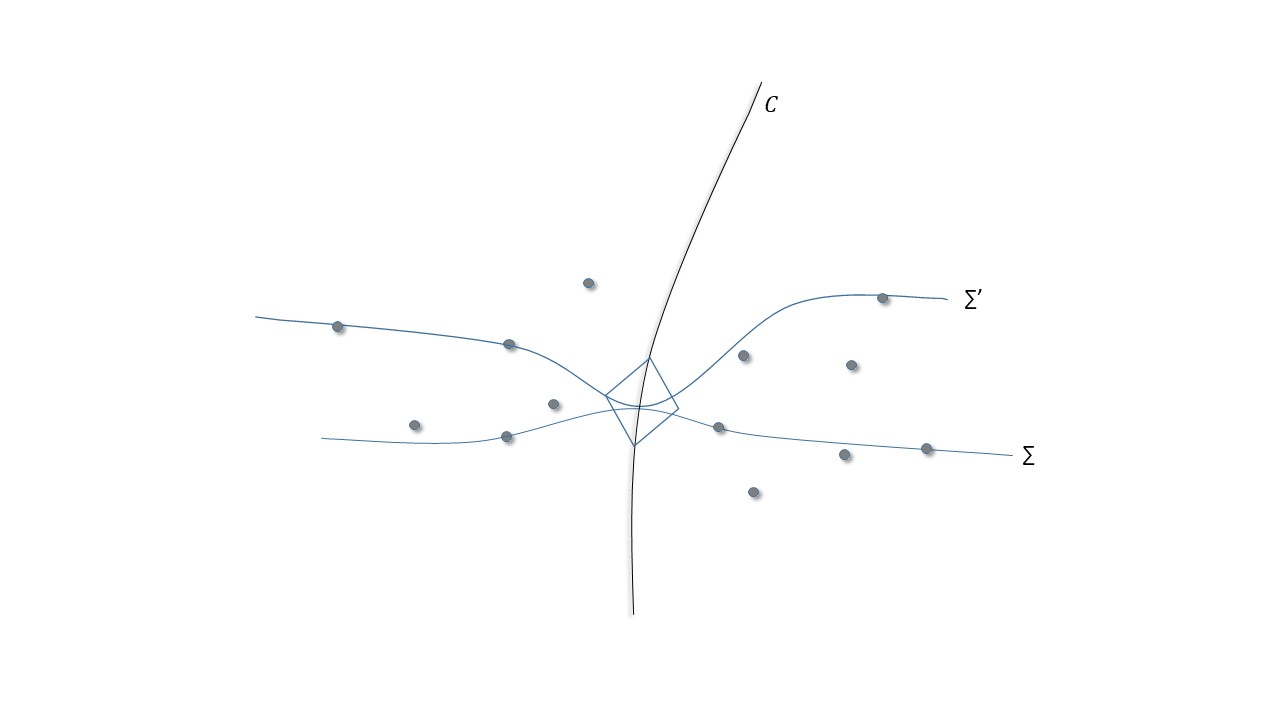}
			\caption{The slices in the figure have an unwanted feature: they intersect at a point. }
		\end{center}
	\end{figure}
	  	The rest of the proof is probably standard facts on causal spaces (in particular it does not build on causal set theory).
		\begin{prop}
			Each slice $S \in \cF'$ is a topological manifold of dimension $n-1$.    
		\end{prop}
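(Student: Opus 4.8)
The plan is to produce, for each slice $S \in \cF'$, a local coordinate chart near an arbitrary point $z \in S$ in which $S$ appears as the graph of a continuous function over a spacelike hyperplane, which exhibits $S$ as an $(n-1)$-dimensional topological manifold. First I would fix $z \in S$ and, using the fact that $\cM$ is a Lorentzian manifold, choose a convex normal neighbourhood $W \ni z$ together with a causal homeomorphism $\psi \colon W \to \mathbb{M}^n$ onto an open subset of Minkowski space, sending $z$ to the origin. Picking a global timelike coordinate $t$ on $\mathbb{M}^n$, every inextendible timelike curve through a point of $\psi(W)$ is, in a small enough sub-neighbourhood, monotone in $t$, so I would use Proposition \ref{Cauchy} (each inextendible timelike curve crosses $S$ exactly once) together with Proposition \ref{temp} (if $S \ll S'$ then no $z' \in \iota(S')$ precedes $z \in \iota(S)$, i.e. the slices are ``achronal'' in the pointwise sense) to conclude that $S \cap W$ meets each vertical $t$-line of the chart in at most one point.

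Next I would argue that $S \cap W$ in fact meets \emph{every} such vertical line through a suitably small coordinate box $B = (-\epsilon, \epsilon)^{n-1} \times (-\delta, \delta)$ around the origin: given a spatial coordinate value $\mathbf x \in (-\epsilon,\epsilon)^{n-1}$, the vertical segment $\{\mathbf x\} \times (-\delta,\delta)$ is (the image under $\psi$ of) a timelike curve segment through $W$; extend it to an inextendible timelike curve $C$ in $\cM$; by Proposition \ref{Cauchy} $C$ meets $S$, and by shrinking $\delta$ relative to $\epsilon$ (using that $S$ is achronal, hence ``nearly horizontal'' in the chart — concretely, two points of $S$ in $W$ cannot be connected by a timelike curve, so the graph has slope bounded by the light cones) one forces that intersection point to lie inside $B$. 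This defines a function $h \colon (-\epsilon,\epsilon)^{n-1} \to (-\delta,\delta)$ with $S \cap B = \{(\mathbf x, h(\mathbf x))\}$. Continuity of $h$ follows from achronality: if $h$ jumped at some $\mathbf x_0$, two nearby points of $S$ would differ in $t$ by more than their spatial separation, producing a timelike connection between two points of $S$, contradicting that $S$ is an antichain with respect to $\prec$ (the last assertion proved in the previous proposition). Pulling back through $\psi^{-1}$, the map $\mathbf x \mapsto \psi^{-1}(\mathbf x, h(\mathbf x))$ is a homeomorphism from the open $(n-1)$-cube onto an open neighbourhood of $z$ in $S$ (with its subspace topology), giving the required chart.

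Finally I would note that these charts cover $S$ as $z$ ranges over $S$, that $S$ is Hausdorff and second countable as a subspace of the manifold $\cM$, and hence $S$ is a topological $(n-1)$-manifold. The main obstacle I anticipate is the step establishing that the graphing function $h$ is defined on a \emph{full} spatial cube and is continuous: this is exactly the place where one must quantitatively exploit achronality of $S$ to control the ``slope'' of $S$ in the chart and rule out both gaps (lines missing $S$) and jumps. The tool for this is the compatibility of the causal homeomorphism $\psi$ with the cone structure — timelike curves of $\cM$ map to timelike curves of $\mathbb{M}^n$ — so that ``no two points of $S \cap W$ are timelike related'' becomes the Lipschitz-type bound $|h(\mathbf x_1) - h(\mathbf x_2)| \le |\mathbf x_1 - \mathbf x_2|$ in the flat chart, after possibly shrinking $W$; everything else is then routine point-set topology.
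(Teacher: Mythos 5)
Your proposal is correct and follows essentially the same route as the paper: both pass to a local causal isomorphism onto Minkowski space and use the fact that each curve of a timelike congruence meets $S$ exactly once (at most once by achronality, at least once by Proposition \ref{Cauchy}) to identify $S \cap U$ homeomorphically with an open subset of $\mathbb{R}^{n-1}$ --- the paper projects along a congruence of timelike curves onto a transversal reference hypersurface $\Sigma_1$, while you write $S$ as a graph over the spatial hyperplane, and you actually supply the surjectivity and continuity details that the paper compresses into ``it can be seen that $\psi$ is a bijection.'' One minor slip: to trap the intersection point inside the coordinate box you must shrink $\epsilon$ relative to $\delta$ (achronality bounds $|h(\mathbf{x})|$ by the spatial radius, so the spatial cube must be small compared with the temporal extent, ensuring the endpoints $(\mathbf{x},\pm\delta)$ lie in $I^{\pm}(z)$ and hence off $S$), not $\delta$ relative to $\epsilon$.
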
 
		\begin{proof}
	     We will equip each $S$ with the induced topology from $\cM$. So we have to show that every open subset of $S$ is in fact homeomorphic to (an open subset of) $\mathbb{R}^{n-1}$. Let $V \subset S$ be open, then $V= U \cap S$ for some open $U \subset \cM$. We may assume that $U$ is a bounded subset, i.e. $U$ is contained in some chronological diamond $\cD := I^-(q) \cap I^+(p)$, $p \neq q \in \cM$.  
	      
	   	 Let $\phi: U \to U_n \subset \mathbb{R}^n$ be a  (local) causal isomorphism, where $U_n$ is some open subset of $\mathbb{M}^n$. \\
		 We have to show that there exists a homeomorphism $\psi: V \to V_{n-1}$ with $V_{n-1}$ an open in $\mathbb{R}^{n-1}$. \\
		 Observe that each timelike curve passing through $U$ intersects $V$ at most once (otherwise two distinct elements of $V$ would be comparable).  \\
	%	Let $p \in V$. Then for each open $W$ such that $p \in W \subset U$, there exists $q \in W \cap S$, $q \neq p$. \\
		We may assume furthermore that the map $\phi$ can be extended to $\overline{\cD}$, where $\cD:= I^-(q) \cap I^+(p)$, $p \neq q \in \cM$ (as above) is an open causal diamond containing $U$. \\
	    Let $\cS'$ be a partition of $\phi(\cD)$ into timelike curves joining $\phi(p)$ to $\phi(q)$.   
		The inverse image of each timelike curve in $\phi(\cD)$ will be a timelike curve in $\cD$; in particular $\cS :=\phi^{-1}(\cS') := \{\phi^{-1}(C)| C \in \cS'\}$ is a partition of $\cD$ into (disjoint) timelike curves.
	
	    Let $\Sigma_1$ be any Cauchy hypersurface crossing $\phi(\cD)$ transversally, i.e. any timelike curve joining $\phi(p)$ to $\phi(q)$ inside $\phi(\cD)$ will cross $\Sigma_1$ only once; this is clearly possible.  \\
		There exists a homeomorphism  $\psi_1: \Sigma_1 \cap \phi(U) \to V_{n-1} $, where $V_{n-1}$ is an open subset of $\mathbb{R}^{n-1}$. \\
		Consider the map $\psi$ defined as follows: \\
		Given any $m \in S \cap U$, let $C_m$ be the timelike curve in $\cS$ which passes through $m$; then
		$\psi$ sends $m$ to the image $\psi_1(\phi(C_m) \cap \Sigma_1)$. It can be seen that $\psi$ is a bijection. \\
		Now let $V'$ be another open $\subset S$ having non-empty intersection with $V$, and let $\phi'$ be a corresponding bijection $\phi': V' \to V'_{n-1}$  
		(with obvious notation). Furthermore, it is trivial to see (by standard properties of continuous maps) that
		$$ \phi' \circ \phi^{-1}|_{\phi(V \cap V')}: \phi(V \cap V') \to \phi'(V \cap V'), $$
		(with $\phi(V \cap V') \subset V_{n-1}$ and $\phi'(V \cap V') \subset V'_{n-1}$) is a homeomorphism.  \\ 
		Our procedure then shows that $S$ is a topological manifold of dimension $n-1$ as required. 
		\end{proof}       
	    Combining the above elements and renaming $\cF'$ as $\cF$ we conclude the proof of Theorem \ref{main-1}. 	
\section{Existence of time functions}
  In this section we consider the issue of the existence of global time functions on a globally hyperbolic manifold.
  \begin{theorem}  \label{time}
  	Let $\cM$ be a globally hyperbolic, geodesically complete spacetime.
  	Let $\Sigma_0$ be a $C^0$ achronal compact subset of $\cM$, which is crossed at most once by any inextendible timelike curve.
  	Then there exists a continuous, strictly increasing time function $t:\cM \to \mathbb{R}$ such that $t^{-1}(0) \supseteq \Sigma_0$ and for each $r \in \mathbb{R}$ , $t^{-1}(r)$ is a Cauchy hypersurface. 
  \end{theorem}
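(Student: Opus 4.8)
The plan is to read the time function directly off the foliation furnished by Theorem \ref{main-1}. First I would apply Theorem \ref{main-1} to $\Sigma_0$: its proof constructs $\cF'$ using only the causal order on $\Sigma_0$ together with the hypothesis that $\Sigma_0$ is compact and crossed at most once by any inextendible timelike curve, so the (slightly weaker) assumption that $\Sigma_0$ is $C^0$ and achronal is all that is used. This yields a foliation $\cF = \cF'$ of $\cM$ by $C^0$ Cauchy hypersurfaces, one of whose slices, say $S_0$, contains $\Sigma_0$. Recall from the construction that the underlying countable foliation is order isomorphic to $\mathbb{Q}$ for $\ll$, that $\cF' \cong \widehat{\cF} \cong \mathbb{R}$, and that $\widehat{\cF}$ has no endpoints; hence I may fix an order isomorphism $\lambda : \cF' \to \mathbb{R}$ with $\lambda(S_0) = 0$.

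Next I would define $t : \cM \to \mathbb{R}$ by $t(z) := \lambda(S)$, where $S$ is the unique slice of $\cF'$ with $z \in S$. This is well defined and surjective because $\cF'$ is a partition of $\cM$ (Proposition \ref{temp} and the subsequent proposition) which is order isomorphic, via $\lambda$, to $\mathbb{R}$. By construction $t^{-1}(0) = S_0 \supseteq \Sigma_0$, and for each $r \in \mathbb{R}$ the level set $t^{-1}(r) = \iota(\lambda^{-1}(r))$ is a slice of $\cF'$, hence a Cauchy hypersurface by Proposition \ref{Cauchy} (and a topological $(n-1)$-manifold by the last proposition of Section \ref{foliat}). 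Strict monotonicity is immediate from Proposition \ref{temp}: if $z \precneqq z'$ with $z \in S$ and $z' \in S'$, then $S \neq S'$ (slices are antichains) and necessarily $S \ll S'$, so $t(z) = \lambda(S) < \lambda(S') = t(z')$; since $\cM$ is globally hyperbolic it contains no closed causal curves, so $t$ is strictly increasing along every future-directed causal curve.

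The remaining point, and the only one I expect to require real work, is continuity of $t$. I would show that $t^{-1}\big((r,\infty)\big)$ and $t^{-1}\big((-\infty,r)\big)$ are open for every $r \in \mathbb{R}$. Fix $q$ with $t(q) > r$; by density of the countable foliation in $\cF'$ pick $S_1 \in \cF$ with $r < \lambda(\iota(S_1)) < t(q)$. Any inextendible timelike curve through $q$ meets the Cauchy hypersurface $\iota(S_1)$ in exactly one point $w$ (Proposition \ref{Cauchy}), and $w \neq q$ since otherwise $t(q) = \lambda(\iota(S_1)) < t(q)$; as $w$ and $q$ lie on a common timelike curve they are chronologically related, and $q \precneqq w$ is impossible (it would give $t(q) < \lambda(\iota(S_1))$ by monotonicity), so $w \precneqq q$, whence $q \in I^+(w) \subseteq I^+(\iota(S_1))$. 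Conversely, if $z \in I^+(\iota(S_1))$ then $t(z) > \lambda(\iota(S_1)) > r$ by the monotonicity just established. Therefore
$$ t^{-1}\big((r,\infty)\big) \;=\; \bigcup_{S_1 \in \cF,\; \lambda(\iota(S_1)) > r} I^{+}\big(\iota(S_1)\big), $$
a union of open sets, hence open; the argument for $t^{-1}\big((-\infty,r)\big)$ using $I^{-}$ is symmetric. Thus $t$ is continuous and has all the asserted properties. In writing this out, the points worth verifying carefully are that the dichotomy "$w \precneqq q$ versus $q \precneqq w$" is exhaustive (it is, because $w$ lies on a timelike curve through $q$ and $\iota(S_1)$ is achronal), and that normalizing $\lambda$ to send $S_0$ to $0$ does not destroy surjectivity (clear, since the order automorphism group of $\mathbb{R}$ acts transitively).
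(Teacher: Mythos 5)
Your proposal is correct and follows essentially the same route as the paper: both define $t$ by transporting the order isomorphism $\cF' \cong \widehat{\cF} \cong \mathbb{R}$ (normalized so that the slice containing $\Sigma_0$ maps to $0$) to a function on $\cM$ that is constant on slices, and both read off the level-set and monotonicity properties from Propositions \ref{temp} and \ref{Cauchy}. The one substantive difference is that where the paper dismisses continuity and strict monotonicity with ``this follows from its construction,'' you actually prove continuity by exhibiting $t^{-1}\bigl((r,\infty)\bigr)$ as a union of sets $I^{+}(\iota(S_1))$ over rational slices $S_1$ with $\lambda(\iota(S_1))>r$ --- a genuine and welcome addition that closes the only real gap in the paper's argument.
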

  \begin{proof}
     Recall that $\cF$, equipped with the induced order $\ll$, is a countable dense linear order without endpoints. It is known that any such order is isomorphic to $(\mathbb{Q}, \leq)$ (with $\leq$ the standard order on the rationals). Hence there is a (non-canonical) isomorphism $\iota: \cF \to \mathbb{Q}$ which satisfies furthermore $\iota(\Sigma'_0) =0$, where $\Sigma'_0 \supseteq \Sigma_0$ is the $\cF$-slice containing $\Sigma_0$. \\
  	As $\widehat{\cF}$ is the Dedekind completion of $\cF$, the map $\iota$ has a unique extension to a map $f: \widehat{\cF} \to \mathbb{R}$ satisfying $f(\Sigma'_0) = 0$.   \\
  	   We denote by $\widehat{\Sigma}'_0$ the $\cF'$-slice containing $\Sigma'_0$. \\
     Using the order isomorphism $i: \cF' \to \widehat{\cF} $ (see the proof of Theorem \ref{main-1}), we obtain a new map $f': \cF' \to \mathbb{R}$, $f' =  f \circ i$. Since $f(\Sigma'_0)=0$ it follows that  $f'(\widehat{\Sigma}'_0) =0$. \\
  	Now define ($\Sigma$ being an arbitrary element of $\cF'$)
  	$$ t: \cM \to \mathbb{R}, \; t(z) = f'(\Sigma) \; \textrm{with} \; z \in \Sigma. $$  
  	It can now be seen that the map $t$ satisfies the conditions: 
  	\begin{itemize}
  		\item $t$ is strictly increasing and continuous: this follows from its construction,
  		\item $t^{-1}(0) \supset \Sigma_0$, and
  		\item for all $r \in \mathbb{R}$, $\Sigma := t^{-1}(r)$ is a Cauchy hypersurface, 
  	\end{itemize}
     as required.    
  \end{proof}
  \section{The slices are almost everywhere spacelike} 
 In this section we will show that the temporal splitting obtained in Theorem \ref{main-1} consists of slices possessing spacelike tangent spaces defined almost everywhere. To achieve this we make use of the following version of Rademacher's Theorem (see, e.g. \cite{L} corrolary 1.5):
 \begin{prop}
 	Let $Z$ be a measurable subset of the Euclidean space $\mathbb{R}^n$ and let $f$ be a locally Lipschitz map to $\mathbb{R}$, then for almost all $z \in Z$ the differential
 	$D_z f$ exists and the restriction $D_z f: \mathbb{R}^n \to D_z f(\mathbb{R}^n)$ is linear.
 \end{prop}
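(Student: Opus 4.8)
The plan is to reduce to the classical Rademacher theorem for globally Lipschitz functions on all of $\mathbb{R}^n$ and then to localize. Differentiability at a point is a local property, and $\mathbb{R}^n$ is a countable union of open balls on each of which, after a McShane--Whitney extension, $f$ coincides with a globally Lipschitz function; so it suffices to prove that a Lipschitz $f:\mathbb{R}^n\to\mathbb{R}$ is differentiable at Lebesgue-almost every point. Intersecting the resulting full-measure set of differentiability points with the measurable set $Z$ leaves a subset of $Z$ whose complement in $Z$ is null, which is the assertion. The final clause — that $D_zf:\mathbb{R}^n\to D_zf(\mathbb{R}^n)$ is linear — is then automatic, since a differential is by construction a linear functional and its corestriction to its image (which is $\{0\}$ or $\mathbb{R}$) remains linear.

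For the global Lipschitz case I would run the three standard steps. First, for each fixed $v\in\mathbb{R}^n$ the map $t\mapsto f(z+tv)$ is Lipschitz on $\mathbb{R}$, hence differentiable for almost every $t$ by the one-dimensional theorem on differentiability of monotone (equivalently, $BV$) functions; Fubini's theorem applied to lines parallel to $v$ then shows that the directional derivative $\partial_vf(z)=\lim_{t\to0}\bigl(f(z+tv)-f(z)\bigr)/t$ exists for almost every $z$. Second, fix the standard basis $e_1,\dots,e_n$ and a countable dense subset $\{v_k\}_{k\in\mathbb{N}}$ of the unit sphere, and let $E$ be the full-measure set on which all these directional derivatives exist; writing $\nabla f(z)=(\partial_{e_1}f(z),\dots,\partial_{e_n}f(z))$, one checks that $\partial_vf(z)=\nabla f(z)\cdot v$ for almost every $z$: for any test function $\varphi\in C_c^\infty(\mathbb{R}^n)$, a change of variables and dominated convergence give $\int\partial_vf\,\varphi=-\int f\,\partial_v\varphi=-\sum_j v_j\int f\,\partial_{e_j}\varphi=\int(\nabla f\cdot v)\,\varphi$, so the bounded function $\partial_vf-\nabla f\cdot v$ annihilates every test function and hence vanishes almost everywhere; a countable intersection over $v\in\{v_k\}$ produces a full-measure set $E'$ on which $\partial_{v_k}f(z)=\nabla f(z)\cdot v_k$ for every $k$.

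Third, I would upgrade directional differentiability along the dense family to genuine Fr\'echet differentiability at almost every $z\in E'$. Fix such a $z$, let $L$ be a Lipschitz constant for $f$, and for a unit vector $v$ and small $t\neq0$ put $Q(v,t)=\bigl(f(z+tv)-f(z)\bigr)/t-\nabla f(z)\cdot v$. Then $|Q(v,t)-Q(v',t)|\leq(L+|\nabla f(z)|)|v-v'|$ uniformly in $t$, while $Q(v_k,t)\to0$ as $t\to0$ for each fixed $k$ by the previous step. Given $\varepsilon>0$, pick $v_{k_1},\dots,v_{k_m}$ forming an $\varepsilon$-net of the sphere and then $\delta>0$ with $|Q(v_{k_i},t)|<\varepsilon$ for $|t|<\delta$ and all $i$; combining the two bounds gives $|Q(v,t)|\leq(L+|\nabla f(z)|+1)\varepsilon$ for every unit $v$ and $|t|<\delta$, i.e. $f(z+h)-f(z)-\nabla f(z)\cdot h=o(|h|)$, which is exactly the existence of $D_zf$. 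The main obstacle is precisely this last passage: directional derivatives existing, even in all directions, does not by itself force differentiability, and the argument closes only because the Lipschitz bound makes the difference quotients equi-Lipschitz in the direction variable, so that a finite net of directions controls all of them. Granting the global statement, the localization described in the first paragraph completes the proof.
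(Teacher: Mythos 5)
The paper does not actually prove this proposition: it is quoted verbatim as a known result, with a pointer to Lytchak's \emph{Differentiation in metric spaces} (Corollary 1.5), and is then used as a black box in the proof of Theorem 6.1. So there is no in-paper argument to compare yours against. That said, your proof is the standard and correct proof of the classical Rademacher theorem: the localization via McShane--Whitney extensions on a countable cover by balls is sound (differentiability being a local property), the Fubini argument for almost-everywhere existence of each fixed directional derivative is standard (you should perhaps remark that the exceptional set is measurable, since the upper and lower difference quotients over rational increments are measurable functions), the identification $\partial_v f=\nabla f\cdot v$ a.e.\ by integrating against test functions is correct, and the passage from a countable dense family of directions to full Fr\'echet differentiability via the equi-Lipschitz bound $\abs{Q(v,t)-Q(v',t)}\leq (L+\abs{\nabla f(z)})\abs{v-v'}$ and a finite $\varepsilon$-net is exactly the step that makes the theorem work; you correctly flag it as the crux. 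The only substantive difference from the cited source is one of generality: Lytchak's result is formulated for maps into metric spaces, where ``differential'' must be interpreted via metric differentiability, whereas for a real-valued target the elementary argument you give is entirely sufficient for the paper's purposes (the map $X^0:V\to\mathbb{R}$ in Theorem 6.1 is real-valued). Your observation that the final clause about $D_zf:\mathbb{R}^n\to D_zf(\mathbb{R}^n)$ being linear is automatic is also correct; that clause is an artifact of the metric-space formulation and carries no content here.
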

 We obtain: 
 \subsection{Theorem} \label{Rade}
 \textit{Let $\cM$ be a globally hyperbolic, geodesically complete spacetime. \\
 	Let $\Sigma$ be a $C^k$-acausal compact subset of $\cM$ ($k \geq 1$) which is crossed at most once by any inextendible timelike curve. Then there exists a foliation $\cF$ of $\cM$ by $C^0$-acausal hypersurfaces foliating $\cM$ such that one of the slices (of the foliation) contains $\Sigma$. Furthermore, the slices are almost everywhere spacelike, i.e. for any $\Sigma' \in \cF$, there exists a null measurable subset $N \subset \Sigma'$ such that for all $p \in \Sigma' \setminus N$, $T_p\Sigma'$ exists and is spacelike.}
    \begin{proof}
    	By the proofs of Theorems \ref{main-1} and \ref{time} we obtain a foliation $\cF$ and a time function $t: \cM \to \mathbb{R}$, constant on $\cF$-slices. \\
 	     Let $\Sigma' \in \cF$ be some slice. We have $t(\Sigma') = t_0 = $const. Let $U$ be some open set in $\cM$ intersecting $\Sigma'$.      \\
 	    Let $\phi: U \to U_n \subset \mathbb{M}^n$ be a causal isomorphism; then for $p \in U$, $\phi(p) = (x^\mu(p))_{\mu=0, \cdots, n-1}$. 
 	    Let $V \subset \mathbb{R}^{n-1}$ be the projection of $\phi(U \cap \Sigma')$ on the last $n-1$-coordinates. \\
 	    Define the following map: $X^0: V \to \mathbb{R}$, by the requirement: 
 	    $$t(\phi^{-1}(X^0(x^1, \cdots, x^{n-1}), x^1, \cdots, x^{n-1})) = t_0 . $$ 
 	    We will show first that $X^0$ is locally Lipshitz; for any $p \in \Sigma' \cap U$ there exists an open causal diamond $\cD \subset U$ containing $p$ such that: for any $p', q' \in \cD \cap \Sigma'$, $|x^0(p') - x^0(q')| < C \cdot \max\{|x^i(p') -x^i(q')|, i=1, \cdots, n-1\}$ (for some constant $C >0$) since $p'$ and $q'$ are incomparable. It follows that  
 	    $$|X^0(x^i(p')) - X^0(x^i(q'))| < C \cdot \max\{|x^i(p')-x^i(q')|, i=1, \cdots, n-1 \} . $$
 	     The last inequality holds for all $p', q' \in \cD \cap \Sigma'$, hence it holds for a sufficiently small open set in $V$ (contained in the projection of $\phi(\cD \cap \Sigma'))$ on the last $n-1$ coordinates). \\
 	      Since $p$ was arbitrary, we see that $X^0$ is locally Lipshitz as claimed.  \\
 	    By Rademacher's Theorem, the map $X^0$ then possess directional derivatives (along each $x^i$, for $i=1, \dots, n-1$) for almost all $(x^i)_{i=1, \cdots, n-1} \in V$.  \\
 	    These directional derivatives allow us to obtain a spacelike tangent space at $p$ to $\Sigma'$ for all $p \in U \cap \Sigma' \setminus N$ with $N \subset \Sigma' $ null measurable. 
 	    Since  $U$, $\Sigma'$ and $\cD$ were arbitrary, the Theorem follows. 	    
    \end{proof}     
  %%%%%%%%%%%%%%%%%%%%%%%%%%%%%%%%%%%%%%%%%%%%%%%%%%%%%%%%%%%%%%%%
    \appendix
    \section{Relation between timelike distance and length of maximal chain}
    Recall the following result (Theorem (1.1) ~\cite{B}): 
     \begin{theorem}
     	Let $\cD$ be a compact spacetime domain with volume $\cV =1$.  Let $a, b \in \cD $ be such that $a \prec b$. \\
     	Let $P$ denote a discrete spacetime associated with $\cD$, then, for any $\varepsilon > 0$, w.h.p.
        \begin{equation} 
     	\abs{\frac{L_P(a,b) - c_d L_D(a,b)n^{1/d}}{n^{1/d}}} < \varepsilon
     	\end{equation}
      	where $c_d$ is a constant which depends only on the dimension. Moreover, given any
     	Riemannian metric which is compatible with the differential structure of $\cM$, and
     	any $\varepsilon > 0$ the maximal chain in $P$ between $a$ and $b$ will be contained w.h.p. in an
     	$\varepsilon$-neighbourhood of a maximal curve in $\cD$ between $a$ and $b$.  
     \end{theorem}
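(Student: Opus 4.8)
The plan is to recover this from the flat‑space, Bollobás–Brightwell–type estimate for the longest chain in a Poisson sprinkling of Minkowski space, combined with a localisation argument that glues the local estimates along the maximal geodesic; this is essentially the content of \cite{B}, so I only indicate the structure. First consider $\cD = \cD(p,q) \subset \mathbb{M}^d$ with timelike separation $\tau(p,q) = T$. Rescaling coordinates by $n^{1/d}$ turns an intensity‑$n$ sprinkling into an intensity‑$1$ sprinkling of the diamond of proper height $s := Tn^{1/d}$; let $\ell(s)$ be the expected length of a longest chain there. Concatenating longest chains through a point on the axis gives superadditivity $\ell(s_1+s_2)\ge\ell(s_1)+\ell(s_2)$, while a layering/first‑moment bound (partition by $\tau(p,\cdot)$ into slices of width $n^{-1/d}$; chain points inside one slice lie in a common diamond of volume $O(n^{-1})$, hence number $O(1)$ in expectation) gives $\ell(s)\le Cs$. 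By Fekete's lemma $\ell(s)/s\to c_d:=\sup_s \ell(s)/s\in(0,\infty)$, so $\mathbb{E}[L_P(p,q)] = c_d\,T\, n^{1/d} + o(n^{1/d})$. Since $L_P$ changes by at most $1$ when one point is added to or deleted from the configuration, a Poisson concentration inequality (Talagrand's convex‑distance inequality, or an Azuma–McDiarmid bound) yields $\mathbb{P}\bigl(\abs{L_P - \mathbb{E}L_P} > \lambda\bigr) \le \exp(-c\lambda^2/\mathbb{E}L_P)$; with $\mathbb{E}L_P \asymp n^{1/d}$ and $\lambda = \varepsilon n^{1/d}$ this gives the asserted inequality with probability tending to $1$ in the flat case.

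Now let $\gamma$ be a maximal causal geodesic from $a$ to $b$, of proper time $L_D(a,b)$ (the Lorentzian distance), and fix $\delta>0$. In geodesic normal coordinates $g = \eta + O(r^2)$, so on small enough diamonds the metric is $\varepsilon$‑close to flat; choose $a = p_0 \prec p_1 \prec \dots \prec p_m = b$ along $\gamma$ with proper‑time gaps $\tau_i < \delta$. Using monotonicity of $L_P$ under inclusion of domains and under $\varepsilon$‑perturbation of the metric, the flat estimate applies in each $\cD(p_i,p_{i+1})$ and produces a longest chain there of length $(c_d\pm\varepsilon)\tau_i\, n^{1/d}$ w.h.p.; concatenating and using $\sum_i\tau_i = L_D(a,b)$ gives $L_P(a,b)\ge (c_d-\varepsilon)L_D(a,b)\,n^{1/d}$. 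For the matching upper bound, cover the compact set $J^+(a)\cap J^-(b) = \overline{\cD(a,b)}$ (compact by global hyperbolicity) by polynomially many (in $n$) small diamonds and apply the concentration bound to each with a union bound — the exponential tail beats the polynomial count — so w.h.p.\ \emph{every} small diamond has a near‑extremal longest chain; hence any chain from $a$ to $b$ has length at most $(c_d+\varepsilon)\bigl(\sup\sum\tau_i\bigr)n^{1/d}$, where the supremum over discretised causal curves in the cover tends to $L_D(a,b)$ by upper semicontinuity of Lorentzian arclength. Finally, a chain realising the maximum must have each of its pieces near‑extremal, so its $\gamma$‑proper‑time profile is near‑uniform; by the limit curve theorem in a globally hyperbolic spacetime, any sequence of causal curves from $a$ to $b$ with proper time $\to L_D(a,b)$ converges uniformly to $\gamma$, which gives the second (neighbourhood) assertion.

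The delicate step is the uniform upper bound: one must control the longest chain simultaneously over all of the polynomially many small diamonds covering $\overline{\cD(a,b)}$, which is precisely where the exponential strength of the concentration inequality is needed, and one must also invoke the genuinely Lorentzian fact (limit curve theorem together with global hyperbolicity) that a causal curve of near‑maximal proper time lies uniformly close to the maximising geodesic. A secondary nuisance is arranging that the \emph{same} constant $c_d$ governs every small diamond uniformly as the mesh $\delta\to0$; this is handled by the $g=\eta+O(r^2)$ expansion together with the monotonicity of $L_P$ in both the domain and the metric.
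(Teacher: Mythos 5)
There is no in-paper proof to compare your argument against: the paper does not prove this theorem. It appears in the appendix prefaced by ``Recall the following result (Theorem (1.1) \cite{B})'' and is used purely as an imported black box from Bachmat's work; the paper's only original content at this point is the pair of observations (a), (b) drawn after the statement. What you have written is therefore a reconstruction of the proof in the cited reference rather than an alternative to anything in this paper. As such it is essentially the right argument: the flat-space part (rescaling to unit intensity, superadditivity along the time axis, Fekete's lemma for the constant $c_d$, a first-moment upper bound, and bounded-difference concentration) is the classical Bollob\'as--Brightwell analysis of longest chains in a Poisson order, and the curved part (normal coordinates, sandwiching the metric between nearby flat metrics to exploit monotonicity of $L_P$, concatenation along the maximizing geodesic for the lower bound, a union bound over a polynomial-size cover for the upper bound, and a limit-curve argument for the localization claim) is the strategy of \cite{B}. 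Two points would need more care in a full write-up. First, your layering upper bound takes the expected number of sprinkled points in a diamond whose endpoints are themselves selected by the chain, i.e.\ in a random, configuration-dependent region; this should be replaced by the standard first-moment count of chains of length $k$ (expected number $\approx n^k/(k!)^d$), which yields $\ell(s)\le Cs$ cleanly. Second, the closing appeal to the limit curve theorem should not assert convergence to \emph{the} maximizer $\gamma$, since maximizing geodesics between $a$ and $b$ need not be unique; the theorem only claims containment in a neighbourhood of \emph{a} maximal curve, and the limit curve theorem delivers exactly subsequential convergence to some maximizer, which is what you should invoke.
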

     Here $n$ denotes the cardinality of $P$ and an event is said to occur with high probability (w.h.p.) whenever its probability 
     approaches $1$ as $n$ tends towards infinity; $L_P$ denotes the length of a maximal sized chain joining $a$ to $b$ and $L_D := \sup_C L(C)$ where $C$ ranges 
     over all timelike curves joining $a$ to $b$ and contained in $D$.    \\
     From the above one can deduce the following: 
     \begin{enumerate} [a.]
     	\item For any $\delta, \varepsilon, 0 < \delta <1, \varepsilon >0$, there exists some integer $N >0$ such that 
     	for all $n >N$ the probability of the event  $\abs{\frac{L_P(a,b) - c_d L_D(a,b)n^{1/d}}{n^{1/d}}} \geq \varepsilon$ is less than $\delta$.
     	\item In the notation of section 4, (but assuming that the sprinkling process is a Poisson process, in particular a \textit{random process}) given a sequence of covers $(\cU_k)_k$, each $\cU_k$ consisting of open diamonds $U_{ik}$, let $U_{k_0} \in \cU_{k_0}$ be a fixed open diamond.  Then the probability of the event 
     	 $\abs{\frac{L_{\cC_k}(a,b) - c_d L_{\overline{U}_{ik}}(a,b)n^{1/d}}{n^{1/d}}} \geq \varepsilon$ for all $a,b \in \overline{U}_{ik} \cap \cC_k$ and for all, $k \geq k_0$ with $U_{ik} \subset U_{k_0}$; with $n$ being the cardinality of $\cC_k \cap \overline{U}_{ik}$, is zero.
     \end{enumerate}
       It follows from these observations that there is no loss of generality of assuming that 
       $$ \frac{L_{\cC_k}(a,b)}{n^{1/d}} \simeq c_dL_{\overline{U}_{ik}}(a,b), $$
       simply by avoiding a set of zero measure in the appropriate probability space.        
	\bibliographystyle{amsplain}
	
\end{document}